\tikzset{snake it/.style={decorate, decoration=snake}}
\theoremstyle{plain}
\newtheorem{theorem}{Theorem}
\newtheorem{proposition}{Proposition}
\newtheorem{lemma}{Lemma}
\definecolor{selectiveyellow}{rgb}{0.8, 0.0, 0.8}
\begin{document}
	
	\title{On the  Bargmann invariants for quantum imaginarity}
	
	\author{Mao-Sheng Li}
	\email{li.maosheng.math@gmail.com}
	\affiliation{School  of Mathematics, South China University of Technology, GuangZhou 510640, China}
	
	\author{Yi-Xi Tan}
	\affiliation{School  of Mathematics, South China University of Technology, GuangZhou 510640, China}
	
	\date{\today}
	
	\begin{abstract}
	The imaginary in quantum theory plays a crucial role in describing quantum coherence and is widely applied in quantum information tasks such as state discrimination, pseudorandomness generation, and quantum metrology. A recent paper by Fernandes et al. [C. Fernandes, R. Wagner, L. Novo, and E. F. Galvão, \href{https://doi.org/10.1103/PhysRevLett.133.190201}{Phys. Rev. Lett. 133, 190201 (2024)}] showed how to use the Bargmann invariant to witness the imaginarity of a set of quantum states. In this work, we delve into the structure of Bargmann invariants and their quantum realization in qubit systems. First, we present a characterization of special sets of Bargmann invariants (also studied by Fernandes et al. for a set of four states) for a general set of 
	$n$ quantum states. Then, we study the properties of the relevant Bargmann invariant set 
	$\mathcal{B}_n$	and its quantum realization in qubit systems. Our results provide new insights into the structure of Bargmann invariants, contributing to the advancement of quantum information techniques, particularly within qubit systems.
	
	\end{abstract}
	
	\maketitle
	
	 \section{Introduction}
	 Quantum theory's reliance on complex numbers is well-established~\cite{Griffiths:2004, GIBBONS1993117}. Fundamental concepts in quantum theory, such as quantum states, measurements, and evolutions, are intrinsically tied to complex numbers. The quantification of the imaginary parts of quantum states, referred to as quantum imaginarity, has emerged as a valuable resource, driving significant advancements in quantum information science~\cite{RevModPhys.91.025001, baumgratz2014quantifying, RevModPhys.89.041003, dressel2012significance, hickey2018quantifying, kunjwal2019anomalous}. Although some works have explored the possibility of formulating quantum theory within a real vector space, proposing frameworks for quantum computing and information processing based on real-number operations~\cite{wootters2010entanglement, PhysRevA.87.052106, McKague2009simulating, Nielsen_Chuang_2010}, both theoretical and experimental results~\cite{Renou2021, PhysRevLett.126.090401, Miller_Imaginarity2022, PRL, Wu2024, Yao_Rule_out_Real_2024} have  established the necessity of quantum imaginarity for accurately modeling certain quantum phenomena. These findings  highlight the fundamental and indispensable role of quantum imaginarity in the precise description and manipulation of quantum systems. As a resource, quantum imaginarity has found applications in a wide range of quantum information tasks, including state discrimination~\cite{PhysRevA.104.042606}, pseudorandomness generation~\cite{PhysRevLett.114.160502}, and quantum metrology~\cite{PhysRevLett.96.010401}.

	  Since the seminal works~\cite{Renou2021, PhysRevLett.126.090401,Wu_PRA_2021}, numerous studies have focused on quantifying quantum imaginarity or the resource theory of quantum imaginarity, employing tools such as the \( l_1 \)-norm, trace norm, and various entropies, while also exploring its diverse applications~\cite{PRL, XueLi_Resource_Imaginarity2021, LiLuoSun2022, Chen2022MeasuresOI, WOS:001354223400001, Qi_Imaginarity_Super_2023, zhu2021hidingmasking, Wu2024, ChenPLA2024, Fan_KDI_2024, WeiFei_PRA_2024, ZhangLi_ImaginarityBroadcast2024, ChenHuangFei2024, WOS:001354223400001, PhysRevA.108.062203, du2024quantifyingimaginaritytermspurestate, Zhang2024Broad}. Similar to quantum coherence, the imaginarity of a state depends on the choice of the computational basis. By studying the imaginarity of a set of quantum states, one can achieve a basis-independent characterization of quantum imaginarity~\cite{oszmaniec2021measuring}.	  
	  More recently, in Ref.~\cite{PhysRevLett.133.190201}, the authors explored how to witness quantum imaginarity by examining unitary-invariants, specifically the Bargmann invariants~\cite{simon1993bargmann}, in sets of quantum states. They provided a comprehensive characterization of the Bargmann invariants for three pure states (i.e., \( \mathcal{B}_3 \)) and a partial characterization for four states, focusing on a subset \( \mathcal{B}_{4|\text{circ}} \) of the full set $\mathcal{B}_4$ of Bargmann invariants for four pure states. Most notably, they demonstrated that the imaginarity of four-state sets can be effectively witnessed through pairwise overlaps, a result that, intriguingly, does not extend to sets of three states. This pivotal finding significantly advances our understanding of quantum imaginarity but raises several intriguing questions, such as how to characterize the Bargmann invariants for sets with more than four states and how to realize these invariants in a qubit system.	  
	  In this work, we provide a deeper exploration of these questions and offer partial solutions to these challenges.
	  
	  The general content and structure of this paper are as follows:
	   In Sec.~\ref{sect:Pre}, we introduce the basic concepts of basis-independent imaginary sets, Bargmann invariants, and the relationship between them.
	  In Sec.~\ref{sect:Main}, we present a complete characterization for a subset, \( \mathcal{B}_{n \mid \text{circ}} \), of Bargmann invariants for \( n \) states and discuss some properties of the total Bargmann invariant set \( \mathcal{B}_n \).
	  In Sec.~\ref{sect:quantum realization}, we demonstrate that all Bargmann invariants in \( \mathcal{B}_{n \mid \text{circ}} \) can be realized using qubits.
	  Finally, we conclude and make a discussion  of our findings in Sec.~\ref{sect:con}.
	  
	\section{Preliminaries }\label{sect:Pre}
	
	Throughout this paper, $\mathbb{Z}, \mathbb{R}, \mathbb{C}$ denote the set of all integers, real numbers, and complex numbers, respectively. Let $\mathscr{H}$ be a quantum system of dimension $d$, $\mathbf{D}_d$ denote the set of all density matrices (self-adjoint, positive semidefinite matrices with trace 1), and $\mathbf{P}_d = \{\vert \psi \rangle \in \mathscr{H} \mid \langle \psi | \psi \rangle = 1 \}$ represent the set of pure states in the system $\mathscr{H}$. For simplicity, we will denote $\psi$ as the density matrix of the pure state $|\psi \rangle$, i.e., $\psi = |\psi \rangle \langle \psi |$.

	Given an ordered set of quantum states \( \vec{\rho} = (\rho_1, \dots, \rho_n) \in \mathbf{D}_d^n \), we are interested in determining whether there exists a basis such that all elements of \( \vec{\rho} \) have real entries. Equivalently, we seek to determine whether there exists a unitary \( U \) such that for all \( i \in \{1, 2, \dots, n\} \), the transformed states \( U \rho_i U^\dagger \) lie in \( \mathrm{Mat}_{d \times d}(\mathbb{R}) \).	
	If no such basis exists, we call the set \( \vec{\rho} \) a basis-independent imaginary set. The Bargmann invariants, defined as the quantity \( \mathrm{Tr}[\rho_1 \rho_2 \cdots \rho_n] \), are clearly basis-independent and are used to characterize such a set. In fact, the imaginarity of the value \( \mathrm{Tr}[\rho_1 \rho_2 \cdots \rho_n] \) implies the basis-independent imaginarity of the set \( \vec{\rho} = (\rho_1, \dots, \rho_n) \).

Let \( B_{n,d} \) denote the set of all Bargmann invariants of length \( n \) states in \( \mathbf{D}_d \), i.e.,
\[
B_{n,d} := \left\{ \mathrm{Tr}[\rho_1 \rho_2 \cdots \rho_n] \mid \rho_i \in \mathbf{D}_d, \forall i = 1, \dots, n \right\}.
\]
We are, in fact, more interested in considering pure states. Therefore, we define
\[
\mathcal{B}_{n,d} := \left\{ \mathrm{Tr}[\psi_1 \psi_2 \cdots \psi_n] \mid |\psi_i\rangle \in \mathbf{P}_d, \forall i = 1, \dots, n \right\}.
\]
  In this case, we have the identity
\[
\mathrm{Tr}[\psi_1 \psi_2 \cdots \psi_n] = \langle \psi_1 | \psi_2 \rangle \langle \psi_2 | \psi_3 \rangle \cdots \langle \psi_n | \psi_1 \rangle.
\]

This leads to the following ascending chain of sets:
\[
\mathcal{B}_{n,2} \subseteq \mathcal{B}_{n,3} \subseteq \cdots \subseteq \mathcal{B}_{n,d} \subseteq \mathcal{B}_{n,d+1} \subseteq \cdots
\]
and we define
\[
\mathcal{B}_n = \bigcup_{d=2}^{\infty} \mathcal{B}_{n,d}.
\]

Let \( \Psi = (|\psi_1 \rangle, \dots, |\psi_n \rangle), \Phi = (|\phi_1 \rangle, \dots, |\phi_n \rangle) \in \mathbf{P}_d^n \). We define \( \Psi \) to be unitary equivalent to \( \Phi \) (denoted \( \Psi \equiv \Phi \)) if there exists a unitary matrix \( U \) such that
\[
|\phi_j \rangle = U |\psi_j \rangle, \quad \forall j = 1, 2, \dots, n.
\]
The equivalence class of elements in \( \mathbf{P}_d^n \) under unitary transformations can be characterized by their Gram matrices. Specifically, we define \( G_\Psi \) as the \( n \times n \) matrix whose \( (i,j) \)-th entry is \( \langle \psi_i | \psi_j \rangle \).
Under this definition, two tuples \( \Psi, \Phi \in \mathbf{P}_d^n \) are unitarily equivalent if and only if their associated Gram matrices satisfy \( G_\Psi = G_\Phi \).

The following lemma provides a characterization of the conditions under which a Hermitian matrix can arise as a Gram matrix corresponding to a set of pure states.

	\begin{lemma}[see Ref.~\cite{chefles2004existence}]\label{lemma: Gram matrix realization}
		Let \( H \) be any candidate \( n \times n \) Hermitian matrix. Then, \( H \) is positive semidefinite with principal diagonal entries \( h_{ii} = 1 \) if and only if there exists some \( d \geq 2 \) (which depends on \( H \)) and some \( \Psi \in \mathbf{P}_d^n \) such that \( H = G_\Psi \).
			
	\end{lemma}
Let \( \mathbf{H}_n \) denote the set of all \( n \times n \) Hermitian matrices \( H \) that are positive semidefinite and have \( h_{ii} = 1 \) for all \( i = 1, \dots, n \). Based on this lemma, one can easily conclude that
\[
\mathcal{B}_n = \left\{ h_{12} h_{13} \cdots h_{n1} \mid H = (h_{kl}) \in \mathbf{H}_n \right\}.
\]
Throughout this paper, for \( k, l \in \{1, 2, \dots, n\} \), we will use the notation \( k \oplus l \) to denote the number
\[
k \oplus l = \begin{cases} 
	k + l & \text{if } k + l \leq n, \\
	k + l - n & \text{if } k + l > n.
\end{cases}
\]
Under this notation, the expression \( h_{12} h_{13} \cdots h_{n1} \) is equivalent to
\(
\prod_{j=1}^{n} h_{j (j \oplus 1)}.
\)
A \( n \times n \) matrix is circulant if it has the form
\begin{equation}\label{eq:circulant}
	G_{\mathbf{z}} =
	\begin{bmatrix}
		z_0 & z_1 & z_2 & \cdots & z_{n-2} & z_{n-1} \\
		z_{n-1} & z_0 & z_1 & \cdots & z_{n-3} & z_{n-2} \\
		z_{n-2} & z_{n-1} & z_0 & \cdots & z_{n-4} & z_{n-3} \\
		\vdots & \vdots & \vdots & \ddots & \vdots & \vdots \\
		z_2 & z_3 & z_4 & \cdots & z_0 & z_1 \\
		z_1 & z_2 & z_3 & \cdots & z_{n-1} & z_0
	\end{bmatrix},
\end{equation}
where \( \mathbf{z} = (z_0, z_1, \dots, z_{n-1}) \in \mathbb{C}^n \).
We denote the set of all \( n \times n \) circulant matrices as \( \mathbf{C}_n \). It follows that
\[
\mathcal{B}_{n \mid \text{circ}} = \left\{ \prod_{j=1}^{n} h_{j (j \oplus 1)} \mid H = (h_{kl}) \in \mathbf{H}_n \cap \mathbf{C}_n \right\}.
\]
Let \( C_n = G_{(0,1,0,\dots,0)} \), i.e., \( C_n \) is the special \( n \times n \) circulant matrix given by
\[
C_n =
\begin{bmatrix}
	0 & 1 & 0 & \cdots & 0 \\
	0 & 0 & 1 & \cdots & 0 \\
	\vdots & \vdots & \ddots & \ddots & \vdots \\
	0 & 0 & 0 & \ddots & 1 \\
	1 & 0 & 0 & \cdots & 0
\end{bmatrix}.
\]
Then, the circulant matrix \( G_{\mathbf{z}} \) defined in Eq. \eqref{eq:circulant} can be written as
\[
G_{\mathbf{z}} = \sum_{j=0}^{n-1} z_j C_n^j.
\]
Thus, the Bargmann invariant set is
\begin{equation}\label{eq:B_n_circ}
	\mathcal{B}_{n \mid \text{circ}} = \left\{ z_1^n \mid \mathbb{I} + \sum_{j=1}^{n-1} z_j C_n^j \in \mathbf{H}_n \right\}.
\end{equation}

\section{Properties of  Bargmann invariant sets}\label{sect:Main}

\begin{figure*}[t]
	\centering
	\includegraphics[width=1.9\columnwidth]{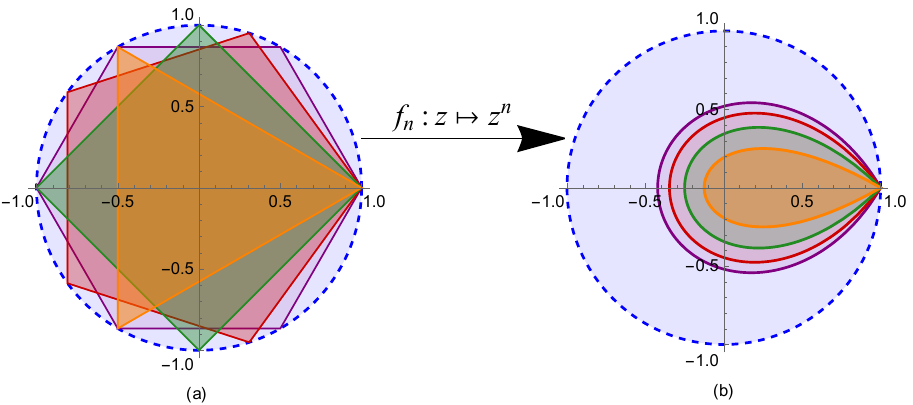}
	\caption{	The set $ \mathcal{P}_n \text{ maps onto the set } \mathcal{B}_{n \mid \text{circ}} \text{ via the } n\text{-th power map } f_n: z \mapsto z^n$.  {(a). The regions of $\mathcal{P}_n \text{ for } n = 3, 4, 5, 6.$ \quad  (b). Sets of quantum-realizable Bargmann invariants of   $\mathcal{B}_{n \mid \text{circ}} \text{ for } n = 3, 4, 5, 6. $		} 
	 }
	\label{fig:tri_square}
\end{figure*}

Note that \( \mathcal{B}_1 = \{1\} \), \( \mathcal{B}_2 = [0, 1] \), and \( \mathcal{B}_3 = \left\{ |\Delta| e^{\mathrm{i}\phi} \in \mathbb{C} \mid 1 - 3 |\Delta|^{\frac{2}{3}} + 2 |\Delta| \cos(\phi) \geq 0 \right\} \), as revealed in Ref. \cite{PhysRevLett.133.190201}. It was conjectured that the set \( \mathcal{B}_4 = \mathcal{B}_{4 \mid \text{circ}} \), where the boundary \( \Delta \) of \( \mathcal{B}_{4 \mid \text{circ}} \) is characterized by the following points:
\[
\Delta = \frac{e^{\mathrm{i}\phi}}{(\sin(\phi/4) + \cos(\phi/4))^4}.
\]

Although \( \mathcal{B}_3 \) has been known, it has not been established whether \( \mathcal{B}_3 = \mathcal{B}_{3 \mid \text{circ}} \). To address this, we first need to characterize the set \( \mathcal{B}_{3 \mid \text{circ}} \). For a circulant matrix \( \mathbb{I} + z_1 C_3 + z_2 C_3^2 \) to be positive semidefinite, it must take the following form (i.e., \( z_2 = \overline{z}_1 \)):
\[
G_{(1, z_1, \overline{z}_1)} = \begin{bmatrix}
	1 & z_1 & \overline{z}_1 \\
	\overline{z}_1 & 1 & z_1 \\
	z_1 & \overline{z}_1 & 1
\end{bmatrix}.
\]
The eigenvalues of the above matrix are:
\[
1 + z_1 + \overline{z}_1, \quad 1 + z_1 \omega + \overline{z}_1 \omega^2, \quad 1 + z_1 \omega^2 + \overline{z}_1 \omega,
\]
where \( \omega = e^{\frac{2\pi \mathrm{i}}{3}} \). Let \( z_1 = x +  \mathrm{i}y \). To ensure that these eigenvalues are nonnegative, the following conditions must hold:
\begin{equation}\label{eq:R_3eq}
	1 + 2x \geq 0, \quad 1 - x - \sqrt{3}y \geq 0, \quad 1 - x + \sqrt{3}y \geq 0.
\end{equation}

Denote the triangle defined by these inequalities as \( \mathcal{P}_3 \) (see Fig. \ref{fig:tri_square}). We then find that
\[
\mathcal{B}_{3 \mid \text{circ}} = \{ z_1^3 \mid z_1 \in \mathcal{P}_3 \}.
\]
It is easy to check that the set \( \{ z_1^3 \mid z_1 \in \mathcal{P}_3 \} \) is exactly \( \left\{ |\Delta| e^{i\phi} \in \mathbb{C} \mid 1 - 3 |\Delta|^{\frac{2}{3}} + 2 |\Delta| \cos(\phi) \geq 0 \right\} \). Since \( \mathcal{B}_3 \) is characterized by the latter set \cite{PhysRevLett.133.190201}, we conclude that \( \mathcal{B}_3 = \mathcal{B}_{3 \mid \text{circ}} \).

Next, we confirm this result using a different method. Since we always have \( \mathcal{B}_{3 \mid \text{circ}} \subseteq \mathcal{B}_3 \), we need only check the reverse inclusion \( \mathcal{B}_3 \subseteq \mathcal{B}_{3 \mid \text{circ}} \).

For each \( z \in \mathcal{B}_3 \), there exists a matrix \( H = (h_{kl}) \in \mathbf{H}_3 \) such that
\[
z = \prod_{k=1}^3 h_{k (k \oplus 1)}.
\]
Without loss of generality, we assume that
\[
H = \begin{bmatrix}
	1 & r_{12} & r_{13} e^{-\mathrm{i}\phi} \\
	r_{12} & 1 & r_{23} \\
	r_{13} e^{\mathrm{i}\phi} & r_{23} & 1
\end{bmatrix}.
\]
Thus, \( z = r_{12} r_{23} r_{31} e^{\mathrm{i} \phi} \). Since \( H \in \mathbf{H}_3 \), we have \( r_{12}^2, r_{13}^2, r_{23}^2 \leq 1 \) and
\begin{equation}\label{eq:DetH3}
	\mathrm{Det}[H] = 1 + 2 r_{12} r_{13} r_{23} \cos \phi - r_{12}^2 - r_{13}^2 - r_{23}^2 \geq 0.
\end{equation}
Let
\[
r = (r_{12} r_{23} r_{31})^{\frac{1}{3}} \quad \text{and} \quad C_H := \begin{bmatrix}
	1 & r e^{ \mathrm{i} \frac{\phi}{3}} & r e^{- \mathrm{i} \frac{\phi}{3}} \\
	r e^{- \mathrm{i} \frac{\phi}{3}} & 1 & r e^{\textbf{i} \frac{\phi}{3}} \\
	r e^{ \mathrm{i} \frac{\phi}{3}} & r e^{- \mathrm{i} \frac{\phi}{3}} & 1
\end{bmatrix}.
\]
The matrix \( C_H \) belongs to \( \mathbf{H}_3 \) if and only if \( r \leq 1 \) and
\begin{equation}\label{eq:DetCH3}
	\mathrm{Det}[C_H] = 1 + 2 r^3 \cos \phi - 3 r^2 \geq 0.
\end{equation}
Clearly, the first condition \( r = (r_{12} r_{23} r_{31})^{\frac{1}{3}} \leq 1 \) holds. 
For the second condition, we consider
\[
\mathrm{Det}[C_H] - \mathrm{Det}[H] = r_{12}^2 + r_{13}^2 + r_{23}^2 - 3 (r_{12} r_{23} r_{31})^{\frac{2}{3}} \geq 0.
\]
The last inequality holds by applying the well-known inequality 
$
\frac{a + b + c}{3} \geq (abc)^{\frac{1}{3}} \quad \text{for} \quad a, b, c \geq 0.
$
Thus, we conclude that
\[
\mathrm{Det}[C_H] \geq \mathrm{Det}[H] \geq 0.
\]
Hence, \( C_H \in \mathbf{H}_3 \). Since \( C_H = \mathbb{I} + z_1 C_3 + z_2 C_3^2 \) where \( z_1 = r e^{\mathrm{i} \frac{\phi}{3}} = \overline{z}_2 \), we have
\[
z = \left( r e^{\mathrm{i} \frac{\phi}{3}} \right)^3 = z_1^3 \in \mathcal{B}_{3 \mid \text{circ}}.
\]
Therefore, \( \mathcal{B}_3 \subseteq \mathcal{B}_{3 \mid \text{circ}} \). To conclude, we have \( \mathcal{B}_3 = \mathcal{B}_{3 \mid \text{circ}} \).

Thus, the Bargmann invariant set \( \mathcal{B}_3 \) is exactly characterized by
\[
\mathcal{B}_3 = f_3(\mathcal{P}_3) = \left\{ |\Delta| e^{\mathrm{i}\phi} \in \mathbb{C} \mid 1 - 3 |\Delta|^{\frac{2}{3}} + 2 |\Delta| \cos(\phi) \geq 0 \right\}.
\]
We have thus reproduced Theorem 1 of Ref. \cite{PhysRevLett.133.190201} by an alternative method.

Now, we seek to characterize the set \(\mathcal{B}_{n \mid \text{circ}}\) for \(n = 4\). Consider a positive semidefinite \(4 \times 4\) circulant matrix, which must take the following form:

\[
G_{(1, z_1, z_2, \overline{z}_1)} = \begin{bmatrix}
	1 & z_1 & z_2 & \overline{z}_1 \\
	\overline{z}_1 & 1 & z_1 & z_2 \\
	z_2 & \overline{z}_1 & 1 & z_1 \\
	z_1 & z_2 & \overline{z}_1 & 1
\end{bmatrix},
\]
where \(z_1 = x +  \mathrm{i} y \in \mathbb{C}\) and \(z_2 \in \mathbb{R}\). The eigenvalues of \(G_{(1, z_1, z_2, \overline{z}_1)}\) are:
$$
\begin{array}{rcl}
	\lambda_1 &=& 1 + z_1 + z_2 + \overline{z}_1 = 1 + 2x + z_2 \geq 0, \\[2mm]
	\lambda_2 &=& 1 + \mathrm{i} z_1 - z_2 - \mathrm{i} \overline{z}_1 = 1 - 2y - z_2 \geq 0, \\[2mm]
	\lambda_3 &=& 1 - z_1 + z_2 - \overline{z}_1 = 1 - 2x + z_2 \geq 0, \\[2mm]
	\lambda_4 &=& 1 - \mathrm{i} z_1 - z_2 + \mathrm{i} \overline{z}_1 = 1 + 2y - z_2 \geq 0.
\end{array}
$$
From these conditions, we deduce that:
$$
|x| + |y| \leq 1,
$$
which describes a square \(\mathcal{P}_4\) (see Fig. \ref{fig:tri_square}) centered at \(0\) with \(1\) as one of its vertices. Therefore, we find that:
\[
\mathcal{B}_{4 \mid \text{circ}} = \{z_1^4 \mid z_1 \in \mathcal{P}_4\}.
\]
This result regarding the characterization of \(\mathcal{B}_{3 \mid \text{circ}}\) and \(\mathcal{B}_{4 \mid \text{circ}}\) can be generalized to \(\mathcal{B}_{n \mid \text{circ}}\) for higher \(n\).

Let \(\mathcal{P}_n\) denote the region enclosed by a regular \(n\)-sided polygon centered at the origin with one vertex at \(1\) in the complex plane. Define the map \(f_n: \mathbb{C} \to \mathbb{C}\) by
\[
f_n(z) = z^n.
\]
Next, let
\[
\mathcal{Z}_n = \left\{ z_1 \mid \mathbb{I} + \sum_{j=1}^{n-1} z_j C_n^j \in \mathbf{H}_n \right\}.
\]
By Eq. \eqref{eq:B_n_circ}, we have \(\mathcal{B}_{n \mid \text{circ}} = \{ z_1^n \mid z_1 \in \mathcal{Z}_n \}\). Thus, we characterize \(\mathcal{B}_{n \mid \text{circ}}\) as the image of the regular polygon \(\mathcal{P}_n\) under \(f_n\), as follows.

\begin{theorem} \label{theorem: polygon}
	For each \(n \geq 3\), the Bargmann invariant set \(\mathcal{B}_{n \mid \emph{circ}}\) is exactly the image of the set \(\mathcal{P}_n\) under the map \(f_n\), i.e.,	
	\[
	\mathcal{B}_{n \mid \emph{circ}} = f_n(\mathcal{P}_n) := \{ f_n(z) \mid z \in \mathcal{P}_n \} = \{ z^n \mid z \in \mathcal{P}_n \}.
	\]
\end{theorem}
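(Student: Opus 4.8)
The plan is to prove the stronger statement that the preimage set $\mathcal{Z}_n$ coincides exactly with the polygon $\mathcal{P}_n$. Since $\mathcal{B}_{n \mid \mathrm{circ}} = \{ z^n \mid z \in \mathcal{Z}_n \} = f_n(\mathcal{Z}_n)$ by Eq.~\eqref{eq:B_n_circ}, the theorem then follows immediately by applying $f_n$ to both sides of $\mathcal{Z}_n = \mathcal{P}_n$. The whole argument therefore reduces to characterizing which $z_1$ admit a completion to a positive semidefinite Hermitian circulant matrix with unit diagonal.

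First I would exploit the fact that every circulant matrix is diagonalized by the discrete Fourier basis. Writing $\omega = e^{2\pi \mathrm{i}/n}$ and $v_k = \tfrac{1}{\sqrt{n}}(1, \omega^k, \dots, \omega^{(n-1)k})^{\mathsf{T}}$, the matrix $\mathbb{I} + \sum_{j=1}^{n-1} z_j C_n^j$ has eigenvalue $\lambda_k = 1 + \sum_{j=1}^{n-1} z_j \omega^{jk}$ on $v_k$, for $k = 0, \dots, n-1$. Positive semidefiniteness is then equivalent to the single family of scalar conditions $\lambda_k \geq 0$ for all $k$, where Hermiticity (i.e.\ $z_{n-j} = \overline{z_j}$) guarantees these eigenvalues are real and is automatically produced by the inverse transform below. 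This replaces the increasingly unwieldy determinant computations used for $n = 3, 4$ by a uniform spectral description valid for all $n$.

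Next I would invert the relation by the discrete Fourier transform, obtaining $z_1 = \tfrac{1}{n} \sum_{k=0}^{n-1} \lambda_k \, \omega^{-k}$, while the unit-diagonal condition $z_0 = 1$ reads $\tfrac{1}{n} \sum_{k=0}^{n-1} \lambda_k = 1$. Setting $p_k = \lambda_k / n$, the constraints become $p_k \geq 0$ and $\sum_k p_k = 1$, so that $z_1 = \sum_{k=0}^{n-1} p_k \, \omega^{-k}$ ranges over precisely the convex hull of the points $\{\omega^{-k}\}_{k=0}^{n-1}$. Since these points are exactly the $n$-th roots of unity, i.e.\ the vertices of the regular $n$-gon centered at the origin with a vertex at $1$, their convex hull is $\mathcal{P}_n$. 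The forward direction gives $\mathcal{Z}_n \subseteq \mathcal{P}_n$, and conversely any $z_1 = \sum_k p_k \omega^{-k} \in \mathcal{P}_n$ is realized by taking $\lambda_k = n p_k \geq 0$, which defines a genuine positive semidefinite Hermitian circulant matrix with unit diagonal; hence $\mathcal{P}_n \subseteq \mathcal{Z}_n$. This establishes $\mathcal{Z}_n = \mathcal{P}_n$ and completes the proof.

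The step requiring the most care is the verification that the scalar conditions $\lambda_k \geq 0$ capture membership in $\mathbf{H}_n \cap \mathbf{C}_n$ with no hidden constraints: one must check that the higher coefficients $z_2, \dots, z_{n-1}$ produced by the inverse transform never obstruct realizability, i.e.\ that they are automatically consistent with Hermiticity ($z_{n-j} = \overline{z_j}$ follows from the $\lambda_k$ being real) and impose no further restriction on $z_1$ beyond convex-hull membership. Once this is confirmed, identifying the convex hull of the $n$-th roots of unity with the polygon $\mathcal{P}_n$ is elementary, and the only remaining point is the harmless observation that passing to $f_n$ preserves the set equality even though $f_n$ itself is not injective on $\mathcal{P}_n$.
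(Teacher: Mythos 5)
Your proof is correct, and it takes a genuinely different (and in fact shorter) route than the paper's. The paper also begins by diagonalizing the circulant matrix with the discrete Fourier transform, recasting membership in $\mathcal{Z}_n$ as $F\mathbf{z}\geq\mathbf{0}$ with $z_0=1$ and $z_{n-j}=\overline{z_j}$; but from there it proves $\mathcal{Z}_n=\mathcal{P}_n$ by a four-step argument: (i) $1\in\mathcal{Z}_n$, (ii) invariance of $\mathcal{Z}_n$ under multiplication by $\xi=e^{2\pi\mathrm{i}/n}$, (iii) convexity of $\mathcal{Z}_n$, and (iv) a supporting-line argument showing every point of $\mathcal{Z}_n$ lies below the edge $\ell_n$ through $1$ and $\xi$, which requires explicitly exhibiting a nonnegative vector $\mathbf{a}$ with $\mathbf{a}^TF=\mathbf{b}^T$ (an LP-duality computation). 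You instead invert the Fourier transform to get $z_1=\tfrac{1}{n}\sum_k\lambda_k\,\omega^{-k}$ with $\tfrac{1}{n}\sum_k\lambda_k=z_0=1$ and $\lambda_k\geq 0$, so that $\mathcal{Z}_n$ is the image of the probability simplex under a linear map sending vertices to the $n$-th roots of unity, i.e.\ exactly their convex hull $\mathcal{P}_n$; the reverse inclusion is immediate because any choice of nonnegative $\lambda_k$ summing to $n$ reconstructs, via the inverse transform, a genuine unit-diagonal Hermitian PSD circulant (Hermiticity $z_{n-j}=\overline{z_j}$ follows from the $\lambda_k$ being real, as you note). This collapses the paper's symmetry, convexity, and separating-hyperplane steps into one observation and identifies both inclusions simultaneously; the one point you flag as delicate --- that the higher coefficients $z_2,\dots,z_{n-1}$ impose no hidden constraint --- is indeed the crux, and your check of it is sound. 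The paper's longer route does yield one useful by-product that your argument would need a sentence to recover: on the boundary edge all but two of the $\lambda_k$ vanish, so $\mathrm{rank}(G_{\mathbf{z}})\leq 2$ and the boundary invariants are qubit-realizable (this also falls out of your parametrization by taking only $p_0,p_1$ nonzero).
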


\noindent \emph{Sketch of the proof:} To prove this, we need to show that \(\mathcal{Z}_n = \mathcal{P}_n\). We prove this equality in four steps, which are detailed in Appendix \ref{ap:Proof_Thm1}.

 \textbf{Step 1:} Show that \(1 \in \mathcal{Z}_n\).
 
 \textbf{Step 2:} If \(z_1 \in \mathcal{Z}_n\), then \(\xi z_1 \in \mathcal{Z}_n\) (where \(\xi = e^{\frac{2\pi \mathrm{i}}{n}}\)).
 
 \textbf{Step 3:} Prove that \(\mathcal{Z}_n\) is a convex set.
 
  \textbf{Step 4:} Show that for each \(z_1 = x + \mathrm{i}y \in \mathcal{Z}_n\),  the point $(x,y)$ must below the line $\ell_n:$
	\[
	x \cos \frac{\pi}{n} + y \sin \frac{\pi}{n} - \cos \frac{\pi}{n} =0,
	\]
	which corresponds to a line passing through the points \(1\) and \(\xi\).

Steps 1 and 2 establish that \(\mathcal{P}_n \subseteq \mathcal{Z}_n\), while Steps 2 and 4 imply that \(\mathcal{Z}_n \subseteq \mathcal{P}_n\). Therefore, we conclude that \(\mathcal{Z}_n = \mathcal{P}_n\). \qed

\vskip 5pt
In fact, the image \( f_n(\mathcal{P}_n) \) is an \( n \)-fold cover of \( \mathcal{B}_{n \mid \text{circ}} \). We can decompose the polygon \( \mathcal{P}_n \) into \( n \) triangles, \( \mathcal{T}_1, \cdots, \mathcal{T}_n \), such that \( \mathcal{B}_{n \mid \text{circ}} = f_n(\mathcal{T}_j) \) for each \( j \). For instance, in Fig. \ref{fig:Three_points}, the triangle \( \mathcal{T}_1 \) is plotted in green, and the image of the green edge under the map \( f_n \) is exactly the green curve \( \partial \mathcal{B}_{n \mid \text{circ}} \), which represents the boundary of \( \mathcal{B}_{n \mid \text{circ}} \).

\begin{figure}[h]
	\centering
	\includegraphics[width=1\columnwidth]{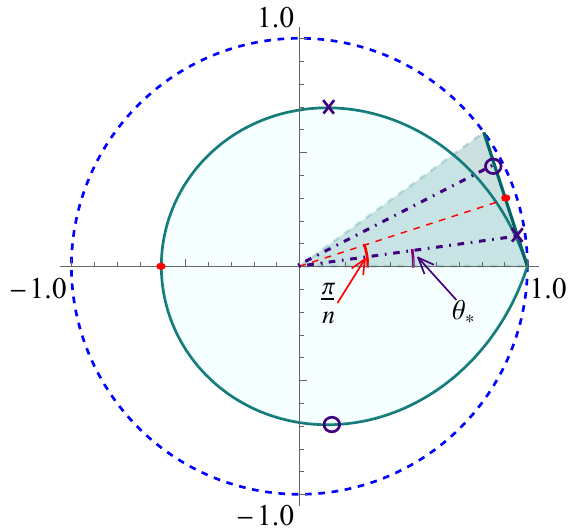}
	\caption{Three special points on the boundary \( \partial \mathcal{B}_{n \mid \text{circ}} \) of the Bargmann invariant set \( \mathcal{B}_{n \mid \text{circ}} \) are considered: the point labeled by ‘\( \times \)’ or ‘\( \circ \)’ corresponds to the maximal imaginarity \( I_n \), while the red bullet {‘\color{red}{$\bullet$}}’	 corresponds to the minimal real number in \( \mathcal{B}_{n \mid \text{circ}} \). Additionally, we indicate how these three points arise from the map \( f_n: z \mapsto z^n \), which is applied to the edge \( \ell_n \) (the line connecting \( 1 \) and \( \xi \)) of \( \mathcal{P}_n \).
	}
	\label{fig:Three_points}
\end{figure}

\vskip 5pt

We can also express the boundary curve of \( \mathcal{B}_{n \mid \text{circ}} \) in polar coordinates. Let \( \xi = e^{\frac{2\pi \mathrm{ i}}{n}} \) and \( c = r e^{\mathrm{i}\phi} \in \partial \mathcal{B}_{n \mid \text{circ}} \). Then there exists \( t \in [0,1] \) such that
\[
c = \left( t \times 1 + (1-t) \xi \right)^n = \left( \sin^2 \varphi + \xi \cos^2 \varphi \right)^n,
\]
where \( t = \sin^2 \varphi \).
The radius \( r = |c| \) is given by
 \begin{equation}
	\begin{array}{rcl}
		r^2&=& \displaystyle \left(\sin^{4}\varphi+\cos^{4} \varphi +2\cos \frac{2\pi}{n}  \sin^2 \varphi \cos^2\varphi\right)^{n}\\[3mm]
		&=& \displaystyle \left( 1 +2(\cos \frac{2\pi}{n} -1) \sin^2 \varphi \cos^2\varphi \right)^{n}\\[3mm]
		&=& \displaystyle \left( 1 -4 \sin ^2 \frac{ \pi}{n}   \sin^2 \varphi \cos^2\varphi \right)^{n}. 
	\end{array}
\end{equation}
The phase \( \phi \) satisfies
 \begin{equation}\label{eq:r_phi}
\tan \frac{\phi}{n} = \frac{\sin  \frac{2\pi}{n}  \cos^2 \varphi}{\sin^2 \varphi + \cos  \frac{2\pi}{n}  \cos^2 \varphi}
= \frac{\sin  \frac{2\pi}{n}   \cos^2 \varphi}{1 - (1 - \cos  \frac{2\pi}{n}  ) \cos^2 \varphi}.
\end{equation}

From this equation, we deduce that
\[
\cos^2 \varphi = \frac{\tan \frac{\phi}{n}}{\tan \frac{\phi}{n} \left( 1 - \cos  \frac{2\pi}{n}   \right) + \sin  \frac{2\pi}{n}  }.
\]
Substituting this into Eq. \eqref{eq:r_phi}, we obtain the relationship between \( r \) and \( \phi \).

For a complex number \( z = x + \mathrm{i} y \in \mathbb{C} \), define
\[
\mathbf{I}(z) = |\mathrm{Im}(z)| = |y|,
\]
which measures the imaginary part of \( z \). It is interesting to find the maximal imaginarity contained in \( \mathcal{B}_{n \mid \text{circ}} \). We observe that
\[
I_n := \max_{z \in \mathcal{B}_{n \mid \text{circ}}} \mathbf{I}(z) = \frac{\cos^n \left( \frac{\pi}{n} \right)}{\cos^n \left( \frac{\pi}{n} - \theta_* \right)} \sin n \theta_*
\]
where \( \theta_* = \frac{\frac{\pi}{2} - \frac{\pi}{n}}{n-1} = \frac{(n-2)\pi}{2n(n-1)} \) (see Fig. \ref{fig:Three_points} for an intuitive representation of the value of \( I_n \)).
In fact, we only need to consider \( \mathbf{I}(z) \) for \( z \in \partial \mathcal{B}_{n \mid \text{circ}} \). Note that
\[
\partial \mathcal{B}_{n \mid \text{circ}} = \left\{ z_1^n \mid z_1 = \sin^2 \varphi + \xi \cos^2 \varphi, \, \varphi \in \left[ 0, \frac{\pi}{2} \right] \right\}.
\]

Writing \( z_1 \) in the form \( r e^{\mathrm{i} \theta} \), and referring to the geometry shown in Fig. \ref{fig:Three_points}, we have the relation
\(
r \cos \left( \frac{\pi}{n} - \theta \right) = \cos \frac{\pi}{n}.
\)
Therefore, 
\[
\mathbf{I}(z) = \mathbf{I}(z_1^n) = r^n \sin n \theta = \frac{\cos^n \frac{\pi}{n}}{\cos^n \left( \frac{\pi}{n} - \theta \right)} \sin n \theta.
\]
It is natural to define the function
\[
f(\theta) = \frac{\sin n \theta}{\cos^n \left( \frac{\pi}{n} - \theta \right)}.
\]
We then find that
\[
f'(\theta) = \frac{n \cos^{n-1} \left( \frac{\pi}{n} - \theta \right)}{\cos^{2n} \left( \frac{\pi}{n} - \theta \right)} \cos \left[ (n-1) \theta + \frac{\pi}{n} \right].
\]

Setting \( f'(\theta_*) = 0 \), we obtain \( (n-1)\theta_* +\frac{\pi}{n}  = \frac{\pi}{2} \), which corresponds to an optimal value \( \mathbf{I}(z_*) \) (see the points label with $\times$ in Fig. \ref{fig:Three_points}). Moreover, the relationship between \( \varphi \) and \( \theta \) is given by
\[
\cos^2 \varphi = \frac{\cos \frac{\pi}{n} \sin \theta}{\sin \frac{2\pi}{n} \cos \left( \frac{\pi}{n} - \theta \right)} = \frac{\sin \theta}{2 \sin \frac{\pi}{n} \cos \left( \frac{\pi}{n} - \theta \right)}.
\]
One can also determine the optimal parameter \( \varphi_* \), which may be useful for experiments.

\vskip 5pt

We are primarily interested in understanding the nature of the set \(\mathcal{B}_n\). First, by definition, we have the following inclusion:
\(
\mathcal{B}_{n \mid \text{circ}} \subseteq \mathcal{B}_n.
\)
Naturally, one might ask whether the reverse inclusion holds, i.e., whether we have
\[
\mathcal{B}_n \subseteq \mathcal{B}_{n \mid \text{circ}}.
\]

For each \(H = (h_{kl}) \in \mathrm{Mat}_{n \times n}(\mathbb{C})\), we can associate it with a circulant matrix \(C_H = \sum_{k=0}^{n-1} z_k C_n^k\), where the coefficients are given by
\[
z_k = \left( \prod_{l=1}^n r_{l(l \oplus k)} \right)^{\frac{1}{n}} e^{\mathrm{i} \theta_k},
\]
with \(\theta_k := \frac{\sum_{l=1}^n \theta_{l(l \oplus k)}}{n}\) and \(h_{kl} = r_{kl} e^{\mathrm{i} \theta_{kl}}\), where \(r_{kl} \geq 0\) and \(\theta_{kl} \in (-\pi, \pi]\). It is important to note that the Bargmann invariants corresponding to \(H\) and \(C_H\) are equal, i.e.,
\[
\prod_{j=1}^n h_{j(j \oplus 1)} = z_1^n.
\]
For the case when \(n = 3\), we deduced that the positive semidefiniteness of \(C_H\) follows from the positive semidefiniteness of \(H\). This leads us to the conjecture that the positive semidefiniteness of \(H\) implies the positive semidefiniteness of \(C_H\) (which may depend on \(n\)) for general \(n\). If this conjecture holds, we would then have \(\mathcal{B}_n = \mathcal{B}_{n \mid \text{circ}}\).

For the time being, we have not been able to prove this strong result. Instead, we present some properties of the set \(\mathcal{B}_n\) in the following propositions (see Proposition \ref{prop:Hn}, \ref{prop:Bn_Surperset}, and \ref{prop:Bnd_Starset}).

Let us first recall the definition of the Hadamard product \(\circ\) of two matrices. Given two matrices \(G = (g_{kl})\) and \(H = (h_{kl})\) in \(\mathrm{Mat}_{n \times n}(\mathbb{C})\), the Hadamard product \(G \circ H\) is also an \(n \times n\) matrix whose \(kl\)-th entry is \(g_{kl} h_{kl}\).

\begin{proposition}\label{prop:Hn}
	If \(H_1, H_2 \in \mathbf{H}_n\), then their Hadamard product \(H_1 \circ H_2 \in \mathbf{H}_n\). As a consequence, \(\mathcal{B}_n\) is closed under multiplication. Specifically, if \(c_1, c_2 \in \mathcal{B}_n\), then \(c_1 c_2 \in \mathcal{B}_n\).
\end{proposition}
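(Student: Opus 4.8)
The plan is to first establish the matrix-level statement $H_1 \circ H_2 \in \mathbf{H}_n$, and then to derive the multiplicativity of $\mathcal{B}_n$ as a direct consequence by tracking what the Hadamard product does to the off-diagonal product $\prod_{j=1}^n h_{j(j\oplus 1)}$. The only genuinely nontrivial input is the Schur product theorem, and I would route it through Lemma~\ref{lemma: Gram matrix realization} so that both positivity and the unit-diagonal normalization fall out simultaneously.

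For the first part, I would invoke Lemma~\ref{lemma: Gram matrix realization}: since $H_1, H_2 \in \mathbf{H}_n$, there exist pure-state tuples $\Psi = (|\psi_1\rangle, \dots, |\psi_n\rangle)$ and $\Phi = (|\phi_1\rangle, \dots, |\phi_n\rangle)$, possibly living in different dimensions $d_1, d_2$, with $(H_1)_{kl} = \langle \psi_k | \psi_l \rangle$ and $(H_2)_{kl} = \langle \phi_k | \phi_l \rangle$. The key observation is the tensor-product identity
\[
(H_1 \circ H_2)_{kl} = \langle \psi_k | \psi_l \rangle \, \langle \phi_k | \phi_l \rangle = \big( \langle \psi_k | \otimes \langle \phi_k | \big) \big( |\psi_l\rangle \otimes |\phi_l\rangle \big),
\]
which exhibits $H_1 \circ H_2$ as exactly the Gram matrix of the product vectors $|\psi_k\rangle \otimes |\phi_k\rangle \in \mathbb{C}^{d_1} \otimes \mathbb{C}^{d_2}$. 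Since each such vector is a unit vector, because its squared norm factorizes as $\langle \psi_k|\psi_k\rangle \langle \phi_k|\phi_k\rangle = 1$, Lemma~\ref{lemma: Gram matrix realization} applies in the reverse direction and yields $H_1 \circ H_2 \in \mathbf{H}_n$. Phrasing the argument this way makes the unit-diagonal bookkeeping automatic, rather than having to separately verify positive semidefiniteness via Schur and then check the diagonal entries by hand.

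For the consequence, I would choose $H_1, H_2 \in \mathbf{H}_n$ realizing $c_1, c_2 \in \mathcal{B}_n$, that is $c_i = \prod_{j=1}^n (H_i)_{j(j\oplus 1)}$. Because $\circ$ acts entrywise, the Bargmann invariant of $H_1 \circ H_2$ factors as
\[
\prod_{j=1}^n (H_1 \circ H_2)_{j(j\oplus 1)} = \prod_{j=1}^n (H_1)_{j(j\oplus 1)} (H_2)_{j(j\oplus 1)} = \left( \prod_{j=1}^n (H_1)_{j(j\oplus 1)} \right) \left( \prod_{j=1}^n (H_2)_{j(j\oplus 1)} \right) = c_1 c_2,
\]
and since $H_1 \circ H_2 \in \mathbf{H}_n$ by the first part, the product $c_1 c_2$ lies in $\mathcal{B}_n$.

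I do not anticipate a real obstacle here. The multiplicativity is a one-line consequence of the entrywise nature of the Hadamard product once the closure of $\mathbf{H}_n$ is in hand. The single point worth stating carefully is that $H_1$ and $H_2$ may be realized in different dimensions, so the product vectors live in $\mathbb{C}^{d_1} \otimes \mathbb{C}^{d_2}$; this causes no difficulty precisely because Lemma~\ref{lemma: Gram matrix realization} permits the dimension to depend on the matrix.
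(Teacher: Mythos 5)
Your proposal is correct and follows essentially the same route as the paper: both realize $H_1$ and $H_2$ as Gram matrices via Lemma~\ref{lemma: Gram matrix realization}, identify $H_1 \circ H_2$ as the Gram matrix of the tensor-product vectors $|\psi_k\rangle \otimes |\phi_k\rangle$, and then read off the multiplicativity of $\mathcal{B}_n$ from the entrywise nature of the Hadamard product. Your explicit remark about the two realizations possibly living in different dimensions is a small clarification the paper leaves implicit, but it is not a different argument.
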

\begin{proof}
	A matrix \(H \in \mathbf{H}_n\) if and only if there exists a set of quantum states \(\Psi = (|\psi_1\rangle, \cdots, |\psi_n\rangle)\) such that \(H = G_\Psi\). Suppose that \(H_1 = G_\Psi\) and \(H_2 = G_\Phi\), where \(\Psi = (|\psi_1\rangle, \cdots, |\psi_n\rangle)\) and \(\Phi = (|\phi_1\rangle, \cdots, |\phi_n\rangle)\). Then we have the tensor product of two sets of quantum states:
	\[
	\Psi \otimes \Phi := \{|\psi_1\rangle \otimes |\phi_1\rangle, \cdots, |\psi_n\rangle \otimes |\phi_n\rangle\}.
	\]
	It can be checked that
	\[
	H_1 \circ H_2 = G_\Psi \circ G_\Phi = G_{\Psi \otimes \Phi} \in \mathbf{H}_n.
	\]
	For the latter statement, if \(c_1, c_2 \in \mathcal{B}_n\), then there exist \(G = (g_{kl})\) and \(H = (h_{kl}) \in \mathbf{H}_n\) such that
	\[
	c_1 = \prod_{j=1}^n g_{j(j \oplus 1)}, \quad c_2 = \prod_{j=1}^n h_{j(j \oplus 1)}.
	\]
	By the previous argument, the Hadamard product \(G \circ H \in \mathbf{H}_n\), and thus the product
	\[
	c_1 c_2 = \prod_{j=1}^n (g_{j(j \oplus 1)} h_{j(j \oplus 1)}) \in \mathcal{B}_n.
	\]
	This completes the proof.
\end{proof}

Although we cannot prove the inclusion \(\mathcal{B}_n \subseteq \mathcal{B}_{n \mid \text{circ}}\) at this time, we can bound \(\mathcal{B}_n\) by a superset.

\begin{proposition}\label{prop:Bn_Surperset}
	Let \(n \geq 3\) be an integer. Then we have
	\[
	\mathcal{B}_n \subseteq \mathcal{P}_n.
	\]
\end{proposition}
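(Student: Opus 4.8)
\emph{Proof proposal.} The plan is to realize the Bargmann invariant $z = \prod_{j=1}^{n} h_{j(j\oplus 1)}$ of an arbitrary $H=(h_{kl})\in\mathbf{H}_n$ as the first off-diagonal coefficient of a \emph{positive semidefinite circulant} matrix, and then invoke Theorem~\ref{theorem: polygon}, which established $\mathcal{Z}_n=\mathcal{P}_n$, i.e.\ characterized $\mathcal{P}_n$ as the set of admissible first coefficients of positive semidefinite circulants. Thus it suffices to manufacture one matrix in $\mathbf{H}_n\cap\mathbf{C}_n$ whose associated invariant is exactly $z$, which places $z$ in $\mathcal{Z}_n=\mathcal{P}_n$.

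First I would symmetrize $H$ over the cyclic group generated by the shift matrix $C_n$. Since $C_n$ is unitary, each conjugate $C_n^a\,H\,(C_n^\dagger)^a$ again lies in $\mathbf{H}_n$: it is a unitary congruence of a positive semidefinite matrix, and its diagonal entries remain $1$. A short computation gives its $(k,l)$-entry as $h_{(k\oplus a)(l\oplus a)}$. Rather than averaging these $n$ conjugates, I would take their \emph{Hadamard} product,
\[
K := H \circ \big(C_n H C_n^\dagger\big) \circ \cdots \circ \big(C_n^{\,n-1} H (C_n^\dagger)^{\,n-1}\big), \qquad K_{kl}=\prod_{a=0}^{n-1} h_{(k\oplus a)(l\oplus a)} .
\]
By Proposition~\ref{prop:Hn}, applied inductively since the Hadamard product is associative, $K\in\mathbf{H}_n$.

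The remaining facts are purely combinatorial. First, $K$ is circulant: replacing $(k,l)$ by $(k\oplus c, l\oplus c)$ merely reindexes the product over $a$, so $K_{kl}$ depends only on $l-k$ modulo $n$; hence $K\in\mathbf{H}_n\cap\mathbf{C}_n$ and $K=\mathbb{I}+\sum_{j=1}^{n-1} z_j C_n^j$ for suitable $z_j$. Second, the first coefficient is exactly the Bargmann invariant: as $a$ ranges over $\{0,\dots,n-1\}$, the index $k\oplus a$ runs over all of $\{1,\dots,n\}$, so
\[
z_1 = K_{k,k\oplus 1}=\prod_{a=0}^{n-1} h_{(k\oplus a)(k\oplus a\oplus 1)}=\prod_{j=1}^{n} h_{j(j\oplus 1)}=z .
\]
The diagonal check $K_{kk}=\prod_{a} h_{(k\oplus a)(k\oplus a)}=1$ confirms the unit diagonal. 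Therefore $z\in\mathcal{Z}_n$, and since $\mathcal{Z}_n=\mathcal{P}_n$ by Theorem~\ref{theorem: polygon}, we get $z\in\mathcal{P}_n$. As $z$ was an arbitrary element of $\mathcal{B}_n$, this yields $\mathcal{B}_n\subseteq\mathcal{P}_n$.

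The genuinely creative step, and thus the main obstacle, is choosing the Hadamard product of the cyclic shifts rather than their arithmetic mean: averaging would only place $\tfrac1n\sum_j h_{j(j\oplus 1)}$ in $\mathcal{P}_n$, whereas the Hadamard product is precisely what converts a product of $n$ distinct overlaps into a single matrix entry while keeping the matrix positive semidefinite. Once this construction is identified, the rest is routine index bookkeeping. It is worth stressing what the argument does \emph{not} give: the first coefficient of $K$ equals $z$ itself, not an $n$-th root of $z$, so we obtain only $z\in\mathcal{Z}_n=\mathcal{P}_n$ and not the stronger $z\in f_n(\mathcal{P}_n)=\mathcal{B}_{n\mid\text{circ}}$ — consistent with the fact that $\mathcal{B}_n=\mathcal{B}_{n\mid\text{circ}}$ remains conjectural.
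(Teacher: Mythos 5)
Your proof is correct and is essentially the paper's own argument: the paper forms the Hadamard product $G_{\Psi_1}\circ\cdots\circ G_{\Psi_n}$ of the Gram matrices of the cyclically permuted state tuples, which are exactly your conjugates $C_n^a H (C_n^\dagger)^a$, and likewise concludes that the resulting positive semidefinite circulant has first coefficient $z_1=\prod_j h_{j(j\oplus 1)}\in\mathcal{Z}_n=\mathcal{P}_n$. The only difference is presentational (matrix conjugation versus reindexed state tuples), and your closing remark about why one lands in $\mathcal{P}_n$ rather than $\mathcal{B}_{n\mid\text{circ}}$ correctly reflects the paper's discussion.
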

\begin{proof}
	For any \(c \in \mathcal{B}_n\), there exists a Gram matrix \(H = (h_{kl})\) such that \(c = \prod_{j=1}^n h_{j(j \oplus 1)} = \prod_{j=1}^n \langle \psi_j | \psi_{j \oplus 1} \rangle\). For each \(k \in \{1, 2, \dots, n\}\), we define
	\[
	\Psi_k := (|\psi_k\rangle, |\psi_{k \oplus 1}\rangle, |\psi_{k \oplus 2}\rangle, \cdots, |\psi_{k \oplus (n-1)}\rangle).
	\]
	It can be verified that the product
	\[
	G_{\Psi_1} \circ G_{\Psi_2} \circ \cdots \circ G_{\Psi_n}
	\]
	is not only in \(\mathbf{H}_n\), but is also a circulant matrix. Specifically, if we define
	\(
	\mathbf{z} =  ( 1, z_1,   z_2 , \cdots, z_{n-1} )^T
	\)
	where \(z_k := \prod_{j=1}^n \langle \psi_j | \psi_{j \oplus k} \rangle\), for \(k \in \{1, 2, \dots, n-1\}\), then
	\[
	G_{\Psi_1} \circ G_{\Psi_2} \circ \cdots \circ G_{\Psi_n} = G_{\mathbf{z}}.
	\]
	By Proposition \ref{prop:Hn}, \(G_{\mathbf{z}} \in \mathbf{H}_n\), and therefore \(z_1 \in \mathcal{P}_n\). Since \(c = z_1\), it follows that \(c \in \mathcal{P}_n\). Thus, we conclude that
	\[
	\mathcal{B}_n \subseteq \mathcal{P}_n.
	\]
\end{proof}

 \begin{figure}[h]
 	\centering
 	\includegraphics[width=0.6\columnwidth]{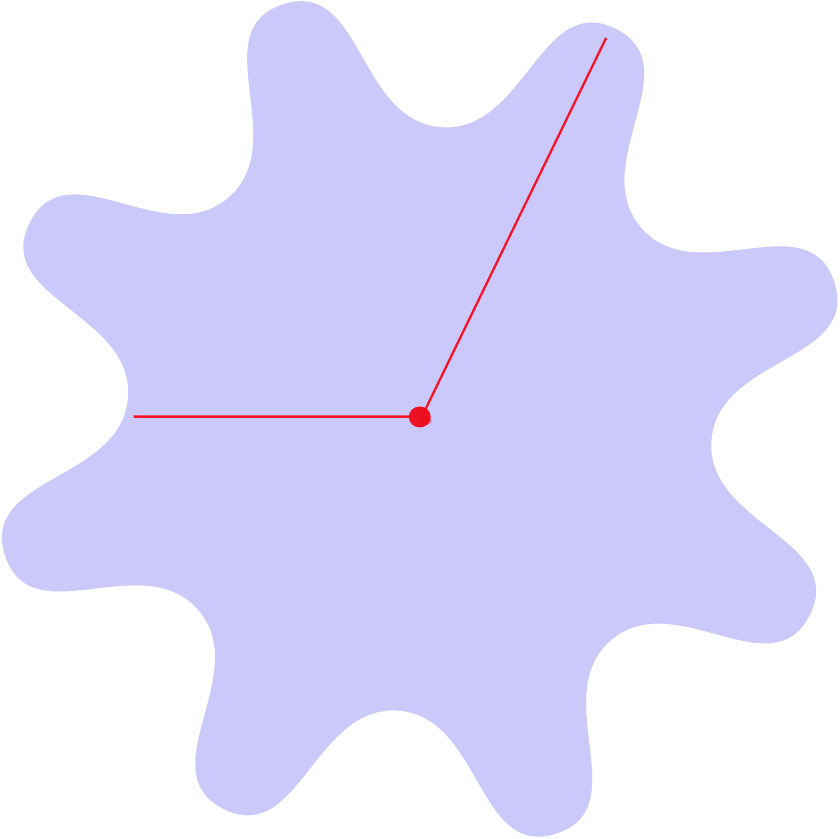}
 	\caption{The figure is a diagram of a set with star-shaped. Note that a set with star-shaped  is not necessarily a convex set.  }
 	\label{fig:star}
 \end{figure}
 
\begin{proposition}\label{prop:Bnd_Starset} 
	Let \(n \geq 2\) be an integer. Then the set \(\mathcal{B}_{n,d}\) is star-shaped with the center at \(0\) (see Fig. \ref{fig:star} for an intuitive representation of a star-shaped set). That is, if \(c \in \mathcal{B}_{n,d}\), then all points lying between \(0\) and \(c\) are also in \(\mathcal{B}_{n,d}\).
\end{proposition}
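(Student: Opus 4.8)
\emph{Proof proposal.} The plan is to show that a single pure state in the cycle can be replaced so as to scale the Bargmann invariant by any real factor \(t\in[0,1]\) while staying in the \emph{same} dimension \(d\). Fix \(c=\prod_{j=1}^{n}\langle\psi_j|\psi_{j\oplus1}\rangle\in\mathcal{B}_{n,d}\) with \(|\psi_j\rangle\in\mathbf{P}_d\), and a target \(t\in[0,1]\). If \(c=0\) the segment between \(0\) and \(c\) is \(\{0\}\), and \(0\in\mathcal{B}_{n,d}\) (take two consecutive states orthogonal), so assume \(c\neq0\), i.e.\ every consecutive overlap is nonzero. I would keep \(|\psi_2\rangle,\dots,|\psi_n\rangle\) fixed and replace only \(|\psi_1\rangle\) by a new pure state \(|\phi\rangle\in\mathbf{P}_d\). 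Since the only factors of the invariant that change are the two overlaps touching \(|\psi_1\rangle\), the new invariant equals \(\dfrac{\langle\phi|\psi_2\rangle\langle\psi_n|\phi\rangle}{\langle\psi_1|\psi_2\rangle\langle\psi_n|\psi_1\rangle}\,c\), so it suffices to find \(|\phi\rangle\) with \(\langle\phi|\psi_2\rangle\langle\psi_n|\phi\rangle=t\,\langle\psi_1|\psi_2\rangle\langle\psi_n|\psi_1\rangle\).

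The key is to recognize this product as a numerical-range value. Define \(M=|\psi_2\rangle\langle\psi_n|\in\mathrm{Mat}_{d\times d}(\mathbb{C})\), so that for any unit vector \(|\phi\rangle\) one has \(\langle\phi|\psi_2\rangle\langle\psi_n|\phi\rangle=\langle\phi|M|\phi\rangle\), an element of the numerical range \(W(M)=\{\langle\phi|M|\phi\rangle:|\phi\rangle\in\mathbf{P}_d\}\). By the Toeplitz--Hausdorff theorem, \(W(M)\) is convex. It contains \(0\) (choose any \(|\phi\rangle\) orthogonal to \(|\psi_2\rangle\), which exists because \(d\ge2\)) and it contains \(g_1:=\langle\psi_1|M|\psi_1\rangle=\langle\psi_1|\psi_2\rangle\langle\psi_n|\psi_1\rangle\neq0\). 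Convexity then places the whole segment \(\{t\,g_1:t\in[0,1]\}\) inside \(W(M)\), so for each \(t\) there is a pure state \(|\phi\rangle\in\mathbf{P}_d\) realizing \(\langle\phi|M|\phi\rangle=t\,g_1\). Substituting this \(|\phi\rangle\) for \(|\psi_1\rangle\) yields \(n\) pure states in \(\mathbf{P}_d\) whose Bargmann invariant is exactly \(t c\), proving \(t c\in\mathcal{B}_{n,d}\).

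The crux---and the step I expect to be most delicate---is obtaining the \emph{entire} segment down to \(0\) within the fixed dimension \(d\). A naive approach, rotating \(|\psi_1\rangle\) along a one-parameter family toward orthogonality with its neighbours, scales the two overlaps but keeps their product on the ray through \(c\) only if one can rotate into a direction orthogonal to both \(|\psi_2\rangle\) and \(|\psi_n\rangle\); such a direction need not exist in low dimension (for \(d=2\) every three states are linearly dependent, so this elementary argument breaks). Phrasing the target product as \(\langle\phi|M|\phi\rangle\) and invoking convexity of the numerical range is precisely what removes this dimensional obstruction and treats all \(d\ge2\) and \(n\ge2\) uniformly. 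The remaining verifications---that \(0\) and \(g_1\) lie in \(W(M)\), and that only the two overlaps at \(|\psi_1\rangle\) are affected---are routine.
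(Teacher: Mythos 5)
Your proof is correct, and it reaches the conclusion by a genuinely different device than the paper's. Both arguments share the same skeleton---replace a single state of the cycle so that only the two adjacent overlaps change, then realize every value $t\,g_1$ of their product---but the paper obtains the segment by an explicit construction: it adjusts a global phase so that $\langle\psi_1|\psi_2\rangle\langle\psi_2|\psi_3\rangle=r>0$, interpolates $|\psi_2\rangle$ toward a suitably phased state $|\psi_1^\perp\rangle$ orthogonal to $|\psi_1\rangle$, checks that $f(p)=\langle\psi_1|\psi(p)\rangle\langle\psi(p)|\psi_3\rangle$ is real-valued with $f(0)=r$ and $f(1)=0$, and invokes the intermediate value theorem. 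You instead recognize $\langle\phi|\psi_2\rangle\langle\psi_n|\phi\rangle$ as a point of the numerical range $W(|\psi_2\rangle\langle\psi_n|)$ and appeal to the Toeplitz--Hausdorff theorem. Your route is shorter and dispenses with the phase bookkeeping (the paper must choose phases twice to force $f$ to be real), at the cost of importing a nontrivial classical theorem; the paper's version is self-contained, elementary, and constructive in that it exhibits the interpolating state explicitly, which is the kind of thing one might want for an experimental realization. Your treatment of the side cases is also sound: $c=0$ is handled directly, and for $n=2$ the two overlaps touching $|\psi_1\rangle$ are complex conjugates, so $M=|\psi_2\rangle\langle\psi_2|$ and the argument goes through uniformly (the paper treats $n=2$ separately via $\mathcal{B}_{2,d}=[0,1]$).
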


\begin{proof}
	For \(n = 2\), we know that \(\mathcal{B}_{2,d} = [0, 1]\), which is clearly a star-shaped set centered at \(0\). Thus, we only need to consider the case \(n \geq 3\). Clearly, \(0 \in \mathcal{B}_{n,d}\).
	
	For each \(c \in \mathcal{B}_{n,d} \setminus \{0\}\), there exists a set of quantum states \(\Psi = (|\psi_1\rangle, |\psi_2\rangle, \cdots, |\psi_n\rangle) \in \mathbf{P}_d^n\) such that
	\[
	c = \prod_{j=1}^n \langle \psi_j | \psi_{j \oplus 1} \rangle.
	\]
	We need to show that for each \(t \in [0, 1]\), the point \(tc = t c + (1 - t) 0 \in \mathcal{B}_{n,d}\).
	
	Since \(c \neq 0\), we have \(\langle \psi_j | \psi_{j \oplus 1} \rangle \neq 0\) for all \(j\). Let \(\langle \psi_1 | \psi_2 \rangle \langle \psi_2 | \psi_3 \rangle = r e^{\mathrm{i} \theta}\), where \(r > 0\) and \(\theta \in [0, 2\pi]\). If \(\theta \neq 0\), we replace \(|\psi_1\rangle\) by \(e^{\mathrm{i} \theta} |\psi_1\rangle\). Therefore, we can always assume that \(\langle \psi_1 | \psi_2 \rangle \langle \psi_2 | \psi_3 \rangle = r\), where \(r\) is a positive real number.
	
	Since \(d \geq 2\), there always exists a state orthogonal to \(|\psi_1\rangle\). Let this state be denoted as \(|\psi_1^\perp\rangle\). By choosing an appropriate global phase for \(|\psi_1^\perp\rangle\), we can also assume that \(\langle \psi_1^\perp | \psi_3 \rangle \langle \psi_1 | \psi_2 \rangle\) is a real number, which we denote by \(r^\perp\). Replacing the second state of \(\Psi\) by \(|\psi_1^\perp\rangle\), we obtain a new set of states \(\Psi^\perp = (|\psi_1\rangle, |\psi_1^\perp\rangle, |\psi_3\rangle, \cdots, |\psi_n\rangle)\).
	
	The states \(|\psi_1^\perp\rangle\) and \(|\psi_2\rangle\) are linearly independent. Otherwise, if \(|\psi_2\rangle \propto |\psi_1^\perp\rangle\), we would have \(\langle \psi_1 | \psi_2 \rangle = 0\). For each \(p \in [0, 1]\), we define a pure state
	\[
	|\psi(p)\rangle = \frac{p |\psi_1^\perp\rangle + (1 - p) |\psi_2\rangle}{N_p},
	\]
	where \(N_p\) is the norm of the vector \(p |\psi_1^\perp\rangle + (1 - p) |\psi_2\rangle\), which is nonzero due to the linear independence of \(|\psi_1^\perp\rangle\) and \(|\psi_2\rangle\). Note that \(\{|\psi(p)\rangle\}_p\) is a family of pure states that connect \(|\psi_1^\perp\rangle\) and \(|\psi_2\rangle\).
	
	Now, consider the set of states \[\Psi(p) = (|\psi_1\rangle, |\psi(p)\rangle, |\psi_3\rangle, \cdots, |\psi_n\rangle).\]The corresponding Bargmann invariant \(c(p)\) arising from \(\Psi(p)\) is
	\[
	c(p) = \langle \psi_1 | \psi(p) \rangle \langle \psi(p) | \psi_3 \rangle \prod_{j=3}^n \langle \psi_j | \psi_{j \oplus 1} \rangle.
	\]
Set
	\(
	f(p) := \langle \psi_1 | \psi(p) \rangle \langle \psi(p) | \psi_3 \rangle.
	\)
	Since \(\prod_{j=3}^n \langle \psi_j | \psi_{j \oplus 1} \rangle = \frac{c}{r}\), we have
	\[
	c(p) = \frac{c f(p)}{r}.
	\]
	Substituting the expression for \(\psi(p)\) into the definition of \(f(p)\), we get
	\[
	f(p) = \frac{1}{N_p^2} \left[ (1 - p) \langle \psi_1 | \psi_2 \rangle \right] \left[ p \langle \psi_1^\perp | \psi_3 \rangle + (1 - p) \langle \psi_2 | \psi_3 \rangle \right].
	\]
	This simplifies to
	\[
	f(p) = \frac{1 - p}{N_p^2} \left[ p r^\perp + (1 - p) r \right] \in \mathbb{R}.
	\]
	Therefore, \(f(p)\) is a continuous real-valued function on \([0, 1]\) with \(f(0) = r\) and \(f(1) = 0\). Since \(f(1) = 0 \leq r t \leq r = f(0)\) for all \(t \in [0, 1]\), by the mean value theorem for continuous functions, there exists some \(p_t \in [0, 1]\) such that
	\[
	f(p_t) = r t.
	\]
	For such \(p_t\), we have
	\[
	c(p_t) = \frac{c f(p_t)}{r} = \frac{c r t}{r} = t c.
	\]
	This shows that \(t c \in \mathcal{B}_{n,d}\) for all \(t \in [0, 1]\), completing the proof.
\end{proof}

	\section{Quantum realization of Bargmann invariants in   qubit system}\label{sect:quantum realization}
	
 In the remark for the proof of Theorem \ref{theorem: polygon}, we pointed out that each point on the boundary \( \partial \mathcal{B}_{n \mid \text{circ}} \) of \( \mathcal{B}_{n \mid \text{circ}} \) has a quantum realization in a qubit system, i.e., \( \partial \mathcal{B}_{n \mid \text{circ}} \subseteq \mathcal{B}_{n,2} \). Here, we provide a more constructive proof of this statement and a stronger result.

	\begin{figure}[h]
		\centering
		\includegraphics[width=\columnwidth]{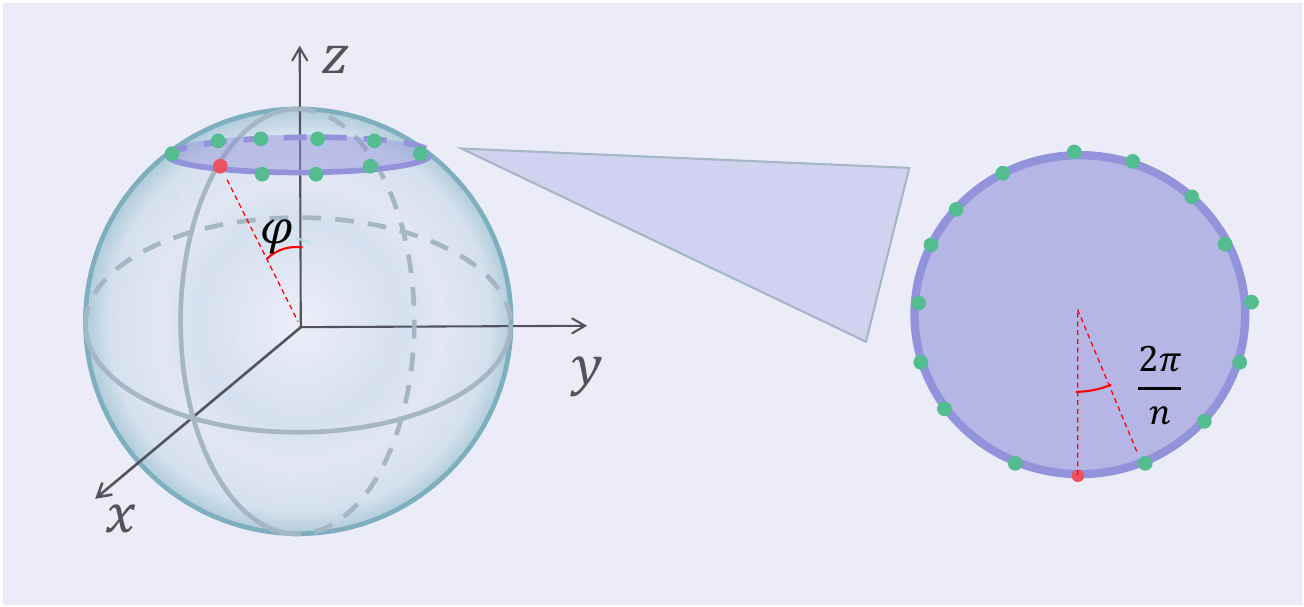}
		\caption{ A visualization of the \(n\)-tuple states \(\Psi(\varphi) = (|\psi_0 \rangle, |\psi_1 \rangle, \cdots, |\psi_{n-1} \rangle)\), as given by Eq. \eqref{eq:qubit_state}, in the context of the Bloch sphere. 
		}\label{fig:states}
	\end{figure}

	 \begin{theorem}\label{theorem: Realization}
	 	For each \( n \geq 3 \), any element \( z \) in the Bargmann invariant set \( \mathcal{B}_{n \mid \emph{circ}} \) can be realized in a qubit system. That is, 
	 	\[
	 	\mathcal{B}_{n \mid \emph{circ}} \subseteq \mathcal{B}_{n,2}.
	 	\]
	 \end{theorem}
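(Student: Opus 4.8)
The plan is to realize the boundary $\partial\mathcal{B}_{n\mid\text{circ}}$ by an explicit family of qubit states and then to fill in the interior using the star-shapedness established in Proposition~\ref{prop:Bnd_Starset}. First I would write down the $n$ equally spaced latitude states on the Bloch sphere,
\[
|\psi_k(\varphi)\rangle = \sin\varphi\,|0\rangle + \cos\varphi\,e^{\mathrm{i}\frac{2\pi k}{n}}|1\rangle, \qquad k = 0,1,\dots,n-1,
\]
which is the family $\Psi(\varphi)$ depicted in Fig.~\ref{fig:states}. A direct computation gives, for every $k$,
\[
\langle \psi_k(\varphi)|\psi_{k\oplus 1}(\varphi)\rangle = \sin^2\varphi + \xi\cos^2\varphi,
\]
the wrap-around term $\langle\psi_{n-1}|\psi_0\rangle$ yielding the same value because $\xi^{-(n-1)} = \xi$. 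Hence all consecutive overlaps coincide and the Bargmann invariant equals $\big(\sin^2\varphi + \xi\cos^2\varphi\big)^n$. Letting $\varphi$ range over $[0,\tfrac{\pi}{2}]$ reproduces precisely the boundary parametrization of $\partial\mathcal{B}_{n\mid\text{circ}}$ recorded after Theorem~\ref{theorem: polygon}, so every boundary point is realized by qubits, i.e. $\partial\mathcal{B}_{n\mid\text{circ}}\subseteq\mathcal{B}_{n,2}$.

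Next I would upgrade this boundary statement to the full region by a radial-scaling argument. Since $\mathcal{P}_n$ is convex and contains the origin, $s z\in\mathcal{P}_n$ whenever $z\in\mathcal{P}_n$ and $s\in[0,1]$; applying $f_n$ and setting $t = s^n$ shows $t\,z^n = (sz)^n\in\mathcal{B}_{n\mid\text{circ}}$, so $\mathcal{B}_{n\mid\text{circ}}$ is star-shaped about $0$. Therefore any $c\in\mathcal{B}_{n\mid\text{circ}}$ can be written as $c = t\,b$ with $t\in[0,1]$ and $b$ the boundary point of $\partial\mathcal{B}_{n\mid\text{circ}}$ lying on the ray from $0$ through $c$.

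Finally I would invoke Proposition~\ref{prop:Bnd_Starset} in the case $d = 2$: the set $\mathcal{B}_{n,2}$ is itself star-shaped about $0$. Since $b\in\mathcal{B}_{n,2}$ by the first step and $t\in[0,1]$, it follows that $c = t b\in\mathcal{B}_{n,2}$, and as $c$ was arbitrary we conclude $\mathcal{B}_{n\mid\text{circ}}\subseteq\mathcal{B}_{n,2}$. The only delicate point is the bookkeeping in the overlap computation—checking that the single free edge of $\mathcal{P}_n$ sweeps out the entire boundary curve under $f_n$, which relies on $\xi^n = 1$ (so all edges of $\mathcal{P}_n$ share the same image). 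The conceptual heart of the argument is simply that star-shapedness of $\mathcal{B}_{n,2}$ promotes a boundary-only realization to a realization of the whole filled region, so no genuinely new obstacle arises once the explicit qubit family is in hand.
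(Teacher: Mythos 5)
Your proposal is correct and follows essentially the same route as the paper: the same explicit qubit family $|\psi_k\rangle=\sin\varphi\,|0\rangle+\xi^k\cos\varphi\,|1\rangle$ realizing the boundary parametrization $(\sin^2\varphi+\xi\cos^2\varphi)^n$, followed by an appeal to the star-shapedness of $\mathcal{B}_{n,2}$ from Proposition~\ref{prop:Bnd_Starset} to fill in the interior. Your only addition is the explicit verification (via convexity of $\mathcal{P}_n$ and $0\in\mathcal{P}_n$) that every point of $\mathcal{B}_{n\mid\text{circ}}$ lies on a segment from $0$ to a boundary point, a step the paper leaves implicit.
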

	 
	 \begin{proof}
	 	First, observe that \( \mathcal{B}_{n \mid \text{circ}} \) represents the region whose boundary is characterized by the curve:
	 	\[
	 	\partial \mathcal{B}_{n \mid \text{circ}} = \{ z_1^n \mid z_1 = t \times 1 + (1-t) \xi, t \in [0,1] \},
	 	\]
	 	where \( \xi = e^{\frac{2\pi \mathrm{i}}{n}} \). We claim that 
	 	\( \partial \mathcal{B}_{n \mid \text{circ}} \subseteq \mathcal{B}_{n,2} \).	 	
	 	  \( \forall \varphi \in [0, 2\pi] \), set \( \Psi(\varphi) = (|\psi_0 \rangle, |\psi_1 \rangle, \cdots, |\psi_{n-1} \rangle) \) as an \( n \)-tuple of single-qubit states, where (see Fig. \ref{fig:states})
	 	\begin{equation}\label{eq:qubit_state}
	 		\left|\psi_k\right\rangle = \sin\varphi |0\rangle + \xi^k \cos\varphi |1\rangle, 
	 	\end{equation}
	 	for \( k = 0, 1, 2, \dots, n-1 \). The \( n \)-order Bargmann invariant of \( \Psi(\varphi) \) is given by
	 	\[
	 	c = \prod_{j=0}^{n-1} \langle \psi_j | \psi_{j \oplus 1} \rangle = \left( \sin^2 \varphi + \xi \cos^2 \varphi \right)^n \in \mathcal{B}_{n,2}.
	 	\]
	 	Let \( \sin^2 \varphi = t \), so \( \cos^2 \varphi = 1 - t \). Therefore, for each \( t \in [0,1] \),
	 	\[
	 	\left( t \times 1 + (1-t) \xi \right)^n \in \mathcal{B}_{n,2}.
	 	\]
	 	Thus, we have \( \partial \mathcal{B}_{n \mid \text{circ}} \subseteq \mathcal{B}_{n,2} \).
	 	
	 	By Proposition \ref{prop:Bnd_Starset}, \( \mathcal{B}_{n,2} \) contains the set of all points connecting 0 to some point on \( \partial \mathcal{B}_{n \mid \text{circ}} \). Therefore, we conclude that
	 $$
	 	\mathcal{B}_{n \mid \text{circ}} \subseteq \mathcal{B}_{n,2}.	 $$
	 \end{proof}
	 
	 \vskip 5pt
	 
	 If the value \( \mathrm{Tr}[ \psi_1 \psi_2 \cdots \psi_n] \) contains a nontrivial imaginary part, one can conclude the imaginarity of the \( n \)-tuple set \( \Psi = (|\psi_1 \rangle, |\psi_2 \rangle, \cdots, |\psi_{n} \rangle) \). It is of interest to consider whether there exist any real numbers of the form \( \mathrm{Tr}[ \psi_1 \psi_2 \cdots \psi_n] \) that could only arise from some quantum state set with imaginarity. This motivates us to define
	 \[
	 \mathcal{B}_{n,d}^{\mathbb{R}} := \left\{ \mathrm{Tr}[ \psi_1 \psi_2 \cdots \psi_n] \mid \psi_i \in \mathbf{P}_d \cap \mathrm{Mat}_{d \times d}(\mathbb{R}), \forall i \right\}.
	 \]
	 
	 \begin{theorem}\label{theorem: Real}
	 	For each \( n \geq 3 \), we have the following relation:
	 	\[
	 	\mathcal{B}_{n,2}^{\mathbb{R}} = \mathcal{B}_{n|\emph{circ}} \cap \mathbb{R} = [-\cos^n \frac{\pi}{n}, 1].
	 	\]
	 	As a consequence, for each \( n \)-order real Bargmann invariant   \( c \in \mathcal{B}_{n \mid \emph{circ}} \), there exists a realization of \( c \) using real states alone.
	 \end{theorem}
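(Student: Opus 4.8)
The plan is to prove both sides of the claimed chain equal the interval $[-\cos^n\frac{\pi}{n},1]$; once each side is computed independently, the equality $\mathcal{B}_{n,2}^{\mathbb{R}} = \mathcal{B}_{n|\text{circ}}\cap\mathbb{R}$ is automatic, and the final remark follows at once since every real element of $\mathcal{B}_{n|\text{circ}}$ then lies in $\mathcal{B}_{n,2}^{\mathbb{R}}$, i.e.\ is realized by real states. First I would reduce $\mathcal{B}_{n,2}^{\mathbb{R}}$ to a trigonometric optimization. A real pure qubit state may be written as $|\psi_j\rangle=\cos\alpha_j|0\rangle+\sin\alpha_j|1\rangle$ with $\alpha_j\in\mathbb{R}$, so every overlap is real, $\langle\psi_j|\psi_{j\oplus1}\rangle=\cos(\alpha_j-\alpha_{j\oplus1})$. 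Setting $\beta_j:=\alpha_j-\alpha_{j\oplus1}$, the Bargmann invariant becomes $\prod_{j=1}^n\cos\beta_j$, and telescoping around the cycle forces $\sum_{j=1}^n\beta_j=0$; conversely any real tuple with vanishing sum arises from a choice of the $\alpha_j$. Hence
\[
\mathcal{B}_{n,2}^{\mathbb{R}} = \Big\{\textstyle\prod_{j=1}^n\cos\beta_j \ \Big|\ \beta_j\in\mathbb{R},\ \sum_{j=1}^n\beta_j=0\Big\}.
\]

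The heart of the argument, and the step I expect to be the main obstacle, is to find the extrema of $g(\beta)=\prod_{j=1}^n\cos\beta_j$ on the compact torus $\{\sum_j\beta_j\equiv0\}$. Since $g$ is smooth, extrema occur at Lagrange critical points, where $\sin\beta_j\prod_{k\neq j}\cos\beta_k$ is independent of $j$. Two cases arise: either two or more factors vanish, giving $g=0$; or all $\cos\beta_j\neq0$, in which case stationarity forces all $\tan\beta_j$ equal, so $\beta_j\equiv\theta\pmod{\pi}$ for a common $\theta\in(-\frac{\pi}{2},\frac{\pi}{2})$. Counting how many $\beta_j$ are shifted by $\pi$ and imposing the sum constraint pins the common value to $\theta=-K\pi/n$ for an integer $K$ with $|K|<n/2$, yielding the critical value $g=(-1)^K\cos^n(K\pi/n)$. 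As $\cos(K\pi/n)>0$ in this range, $g$ is negative precisely for odd $K$, where $g=-\cos^n(|K|\pi/n)$ is most negative at $|K|=1$. Thus $\min g=-\cos^n\frac{\pi}{n}$ (realized, e.g., by $n-1$ angles equal to $\pi/n$ and one equal to $-(n-1)\pi/n$) and $\max g=1$ at $\beta=0$; connectedness of the constraint torus then forces the image to be exactly $[-\cos^n\frac{\pi}{n},1]$. The care needed is to rule out that some degenerate configuration beats these candidates, which is handled by compactness together with the exhaustive critical-point classification above.

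For the circulant side I would invoke Theorem~\ref{theorem: polygon}, giving $\mathcal{B}_{n|\text{circ}}=f_n(\mathcal{P}_n)=\{z^n\mid z\in\mathcal{P}_n\}$. Since $\mathcal{P}_n$ is the regular $n$-gon inscribed in the unit circle, every $z\in\mathcal{P}_n$ has $|z|\le1$, so $|z^n|\le1$ and $1=1^n$ is the largest real value. A point $z=\rho e^{\mathrm{i}\psi}$ has $z^n$ real and negative exactly when $\psi=(2k+1)\pi/n$, i.e.\ along an edge-midpoint (apothem) direction, where the maximal radius inside $\mathcal{P}_n$ is the apothem $\cos\frac{\pi}{n}$; hence $z^n=-\rho^n\ge-\cos^n\frac{\pi}{n}$, with equality at an edge midpoint. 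Moreover $f_n(\mathcal{P}_n)$ is star-shaped about $0$: for $z\in\mathcal{P}_n$ and $s\in[0,1]$, convexity of $\mathcal{P}_n$ gives $s^{1/n}z\in\mathcal{P}_n$ and $(s^{1/n}z)^n=sz^n$. Therefore the entire segment from $-\cos^n\frac{\pi}{n}$ through $0$ to $1$ lies in $\mathcal{B}_{n|\text{circ}}$, so $\mathcal{B}_{n|\text{circ}}\cap\mathbb{R}=[-\cos^n\frac{\pi}{n},1]$, matching the qubit computation and completing the proof.
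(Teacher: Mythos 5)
Your proposal is correct, and for the main equality $\mathcal{B}_{n,2}^{\mathbb{R}}=[-\cos^n\frac{\pi}{n},1]$ it follows essentially the same route as the paper: write real qubit states as $\cos\alpha_j\ket{0}+\sin\alpha_j\ket{1}$, reduce to optimizing $\prod_j\cos\beta_j$ subject to $\sum_j\beta_j=0$, and classify the critical points by forcing all $\tan\beta_j$ equal, hence $\beta_j\equiv K\pi/n \pmod{\pi}$ with critical value $(-1)^K\cos^n(K\pi/n)$. Your version is in fact slightly more careful than the printed proof on two points: you explicitly dispose of critical points where some $\cos\beta_j=0$ (the paper silently divides by the product), and you invoke connectedness of the constraint torus to pass from the two extreme values to the full interval (the paper asserts $[-\cos^n\frac{\pi}{n},1]\subseteq\mathcal{B}_{n,2}^{\mathbb{R}}$ after exhibiting only the endpoint). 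The genuine difference is that you also prove the middle equality $\mathcal{B}_{n\mid\text{circ}}\cap\mathbb{R}=[-\cos^n\frac{\pi}{n},1]$ directly from Theorem~\ref{theorem: polygon}, by observing that $z^n$ is real and negative only along the apothem directions of $\mathcal{P}_n$, where the maximal radius is $\cos\frac{\pi}{n}$, and by noting that $f_n(\mathcal{P}_n)$ is star-shaped about $0$ since $s^{1/n}z\in\mathcal{P}_n$ by convexity. The paper's proof of Theorem~\ref{theorem: Real} never addresses this second equality at all (it tacitly relies on Theorems~\ref{theorem: polygon} and \ref{theorem: Realization} together with Proposition~\ref{prop:Bnd_Starset}), so your argument fills in a step the paper leaves to the reader.
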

	 
	 \begin{proof}
	 	First, we show \( \mathcal{B}_{n,2}^{\mathbb{R}} \subseteq [-\cos^n \frac{\pi}{n}, 1] \). For any \( c \in \mathcal{B}_{n,2}^{\mathbb{R}} \), there exists \( \Psi = (|\psi_1 \rangle, |\psi_2 \rangle, \dots, |\psi_n \rangle) \in \mathbf{P}_2^n \cap (\mathbb{R}^2)^n \) such that
	 	\[
	 	c = \prod_{j=1}^n \langle \psi_j | \psi_{j \oplus 1} \rangle.
	 	\]
	 	Since each \( |\psi_j \rangle \in \mathbf{P}_2 \cap \mathbb{R}^2 \), it can be written as
	 	\[
	 	|\psi_j \rangle = \cos \varphi_j |0\rangle + \sin \varphi_j |1\rangle.
	 	\]
	 	Thus,
	 $
	 	c = \prod_{j=1}^n (\cos \varphi_j \cos \varphi_{j \oplus 1} + \sin \varphi_j \sin \varphi_{j \oplus 1}).
	 $
	 	This simplifies to
	 	\(
	 	c = \prod_{j=1}^n \cos (\varphi_j - \varphi_{j \oplus 1}).
	 	\)
	 	To find the maximal and minimal values of the function
	 	\[
	 	f(\varphi_1, \dots, \varphi_n) = \prod_{j=1}^n \cos (\varphi_j - \varphi_{j \oplus 1}),
	 	\]
	 	we compute the partial derivatives, leading to the equation \( \tan(\varphi_{j \ominus 1} - \varphi_j) = \tan(\varphi_j - \varphi_{j \oplus 1}) \).  Therefore, $\varphi_{j\ominus 1} -  \varphi_{j } +k_j \pi= \varphi_{j } -  \varphi_{j\oplus 1}  $.  That is, $\varphi_{j\ominus 1} -  \varphi_{j } \equiv  \varphi_{j } -  \varphi_{j\oplus 1}  \text { mod } \pi$ (here, $a\equiv b  \text { mod } \pi $ if and only if $a-b \in \mathbb{Z} \pi $).  Set $x_j=\varphi_{j } -  \varphi_{j\oplus 1}$, then we have $\sum_{j=1}^n x_j=0$ and  $x_j\equiv x_{j\ominus 1} \text { mod } \pi .$ There is only one $x\in [0,\pi)$ such that $x_1 \equiv x \text { mod } \pi$. So we have $$0=\sum_{j=1}^n x_j \equiv \sum_{j=1}^n x=nx \text { mod } \pi.$$
	 	So $x=\frac{k\pi }{n}$  for some $k\in  \{0, 1,2,\cdots, n-1\}.$ And $x_j=\frac{k\pi }{n}+ n_j \pi$ for some $n_j\in \mathbb{Z}.$ Then 
	 	$$c=  \prod_{j=1}^n   \cos  x_j = \prod_{j=1}^n \left[(-1)^{n_j} \cos \frac{k\pi}{n}\right]=(-1)^{e_{\mathbf{n}}} \cos ^n \frac{k \pi }{n}$$
	 	where the exponential $e_{\mathbf{n}}=\sum_{j=1}^n n_j$. For the case $k=0$, i.e., $x=0$, we should have $$e_{\mathbf{n}} \pi= \sum_{j=1}^n n_j\pi=\sum_{j=1}^n x_j=0$$
	 	which implies $e_{\mathbf{n}}=0$ and the corresponding $c=1.$ For all other $k$,
	 	we have $c\geq (-1)^{e_{\mathbf{n}}} \cos ^n \frac{k \pi }{n}\geq -\cos ^n \frac{ \pi }{n}.$  Hence 
	 	$$\mathcal{B}_{n,2}^{\mathbb{R}}\subseteq  [-\cos^n\frac{\pi}{n},1].$$

	 	On the other hand, the value \( -\cos^n \frac{\pi}{n} \) can be achieved by setting \( x_1 = x_2 = \cdots = x_{n-1} = \frac{\pi}{n} \) and \( x_n = -\frac{(n-1)\pi}{n} = -\left( \pi - \frac{\pi}{n} \right) \). This shows that \( [-\cos^n \frac{\pi}{n}, 1] \subseteq \mathcal{B}_{n,2}^{\mathbb{R}} \). Therefore, we conclude that
	 	\[
	 	\mathcal{B}_{n,2}^{\mathbb{R}} = [-\cos^n \frac{\pi}{n}, 1].
	 	\]	  		
 	\end{proof}  
At present, it is not known whether \( \mathcal{B}_{n,2} \cap \mathbb{R} = [-\cos^n \frac{\pi}{n}, 1] \). If this equality holds, the real Bargmann invariant would provide no additional information. However, if the equality does not hold, there would exist some real Bargmann invariant \( r \) from which we could also deduce the imaginarity of the corresponding set.

\vskip 8pt

Note that $\mathcal{B}_{n,d}\subseteq  \mathcal{B}_{n+1,d}.$ In fact, for each $c\in \mathcal{B}_{n,d}$,  there exists  $\Psi=(|\psi_1\rangle,|\psi_2\rangle, \cdots, |\psi_n\rangle)\in \mathbf{P}_2^n $ such that 
$$c=\prod_{j=1}^n \langle \psi_j|\psi_{j\oplus 1}\rangle.$$
And we set $\Psi'=(|\psi_1\rangle,|\psi_2\rangle, \cdots, |\psi_n\rangle,|\psi_n\rangle)\in \mathbf{P}_2^{n+1} $, then 
$$c_{\Psi'}= \left(\prod_{j=1}^{n-1} \langle \psi_j|\psi_{j\oplus 1}\rangle\right) \langle\psi_n|\psi_n\rangle \langle \psi_n|\psi_1\rangle=\prod_{j=1}^n \langle \psi_j|\psi_{j\oplus 1}\rangle.$$ So $c_\Psi= c_{\Psi'} \in \mathcal{B}_{n+1,d}$. It is highly challenging to determine the exact form of the set $\mathcal{B}_{n,d}$. Therefore, it is of particular interest to investigate the limit of these sets, which motivates us to define
\[
\mathcal{B}_{\infty,d} = \bigcup_{n=1}^\infty \mathcal{B}_{n,d}.
\]

We will show that  the limit set $\mathcal{B}_{\infty,d}$  can be fully described by $\mathbb{S}:=\{x+y\mathrm{i}\in \mathbb{C} | x^2+y^2<1, x,y\in \mathbb{R}\}\cup\{1\}.$  The set 
$\mathbb{S}$ represents the open unit disk in the complex plane along with the point  1.

\begin{theorem}\label{thm:ValueDisk}
	Fix an integer $d\geq 2.$ For each $z\in \mathbb{S}$, the value $z$ can be realized in $\mathcal{B}_{n,d}$ for some $n$. That is, there exists some $n\in\mathbb{N}$ and $|\psi_j\rangle\in \mathbf{P}_d$ ($1\leq j\leq n$) such that $z=\tr[\psi_1\psi_2\cdots \psi_n].$ Moreover, we have the identity 
	$$\mathcal{B}_{\infty,2}=\mathcal{B}_{\infty,d} =\mathbb{S}.$$	
\end{theorem}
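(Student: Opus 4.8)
The plan is to sandwich $\mathcal{B}_{\infty,d}$ between $\mathbb{S}$ from above and $\mathcal{B}_{\infty,2}$ from below. The paper already records the ascending chain $\mathcal{B}_{n,2}\subseteq\mathcal{B}_{n,d}$, and taking unions over $n$ gives $\mathcal{B}_{\infty,2}\subseteq\mathcal{B}_{\infty,d}$ for every $d\ge 2$. Hence it suffices to prove the two inclusions $\mathcal{B}_{\infty,d}\subseteq\mathbb{S}$ and $\mathbb{S}\subseteq\mathcal{B}_{\infty,2}$: the chain $\mathbb{S}\subseteq\mathcal{B}_{\infty,2}\subseteq\mathcal{B}_{\infty,d}\subseteq\mathbb{S}$ then forces all three sets to coincide, and the first assertion of the theorem is simply the inclusion $\mathbb{S}\subseteq\mathcal{B}_{\infty,d}$.

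For the upper bound $\mathcal{B}_{\infty,d}\subseteq\mathbb{S}$, I would take an arbitrary $c=\prod_{j=1}^{n}\langle\psi_j|\psi_{j\oplus 1}\rangle\in\mathcal{B}_{n,d}$ and apply the Cauchy--Schwarz inequality factorwise to obtain $|c|\le\prod_{j}|\langle\psi_j|\psi_{j\oplus 1}\rangle|\le 1$. If $|c|<1$, then $c$ lies in the open unit disk; if $|c|=1$, then every overlap saturates Cauchy--Schwarz, forcing consecutive states to be proportional, so all $|\psi_j\rangle$ represent a single physical state $|\phi\rangle$ up to phases $|\psi_j\rangle=e^{\mathrm{i}\gamma_j}|\phi\rangle$, and the cyclic product $\prod_j e^{\mathrm{i}(\gamma_{j\oplus 1}-\gamma_j)}$ telescopes to $1$. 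Thus $c\in\mathbb{S}$ in every case.

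The substantive direction is $\mathbb{S}\subseteq\mathcal{B}_{\infty,2}$, and here I would build on two facts already established: $\partial\mathcal{B}_{n\mid\text{circ}}\subseteq\mathcal{B}_{n,2}$ (from the proof of Theorem~\ref{theorem: Realization}) and the star-shapedness of $\mathcal{B}_{n,2}$ with center $0$ (Proposition~\ref{prop:Bnd_Starset}). The point $1$ is already realized at $n=1$. For a target $z=re^{\mathrm{i}\phi}$ with $r<1$, the idea is to locate, for a suitable $n$, a boundary point $w\in\partial\mathcal{B}_{n\mid\text{circ}}$ of argument $\phi$ whose modulus exceeds $r$; star-shapedness then places the entire segment $[0,w]$, and in particular $z$, inside $\mathcal{B}_{n,2}$. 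Writing a boundary point as $w=z_1^{n}$ with $z_1=\sin^2\varphi+\xi\cos^2\varphi$, I would use $|z_1|^2=1-4\sin^2\tfrac{\pi}{n}\sin^2\varphi\cos^2\varphi\ge\cos^2\tfrac{\pi}{n}$ (since $\sin^2\varphi\cos^2\varphi\le\tfrac14$) to deduce the uniform lower bound $|w|=|z_1|^{n}\ge\cos^{n}\tfrac{\pi}{n}$, valid for the whole boundary curve and independent of $\phi$.

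The crux of the argument, and the step I expect to require the most care, is the asymptotic $\cos^{n}\tfrac{\pi}{n}\to 1$ as $n\to\infty$, which I would justify via $n\log\cos\tfrac{\pi}{n}\sim-\tfrac{\pi^2}{2n}\to 0$. Combined with the observation that as $\varphi$ runs over $[0,\tfrac{\pi}{2}]$ the argument $\phi=n\arg z_1$ sweeps the full range $[0,2\pi]$, so that $\partial\mathcal{B}_{n\mid\text{circ}}$ is a closed loop encircling the origin and meeting every ray through $0$, this guarantees that once $n$ is large enough that $\cos^{n}\tfrac{\pi}{n}>r$, the boundary contains a point $w$ of argument $\phi$ with $|w|>r$. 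Therefore $z\in[0,w]\subseteq\mathcal{B}_{n,2}\subseteq\mathcal{B}_{\infty,2}$, which yields $\mathbb{S}\subseteq\mathcal{B}_{\infty,2}$ and completes the proof.
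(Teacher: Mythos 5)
Your proposal is correct and follows essentially the same route as the paper: both directions hinge on the modulus bound $|c|\le 1$ with the equality case forcing $c=1$, and on the limit $\cos^n\frac{\pi}{n}\to 1$ guaranteeing that a disk of radius exceeding $|z|$ sits inside $\mathcal{B}_{n\mid\text{circ}}\subseteq\mathcal{B}_{n,2}$ for $n$ large. The only cosmetic difference is that the paper obtains this disk as the image under $f_n$ of the inscribed circle of $\mathcal{P}_n$ and then invokes Theorem~\ref{theorem: Realization}, whereas you inline the same content by bounding $|z_1|\ge\cos\frac{\pi}{n}$ on the boundary segment and re-applying star-shapedness directly.
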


\begin{proof}
	First, we show that $\mathbb{S}\subseteq \mathcal{B}_{\infty,d}.$ Fix $z\in \mathbb{S}$. If $z=1,$ clearly, $z\in\mathcal{B}_{\infty,d}.$ Otherwise, $z=r e^{\mathrm{i} \theta } $ with $r<1.$ As $\lim\limits_{n\rightarrow \infty } \cos^{n}\frac{\pi}{n}=1$, there exists some $n$ such that $\cos^{n}\frac{\pi}{n}>r.$ Let $\mathcal{D}_{r_n}=\{z_1\in \mathbb{C} \mid |z_1|\leq r_n=\cos \frac{\pi}{n}\} $. By the geometirc intuition of Fig. \ref{fig:Three_points}, we have $\mathcal{D}_{r_n} \subseteq  \mathcal{R}_n$.	Therefore, $$ \mathcal{D}_{r_n^n}= \{z_1^n\mid z_1  \in \mathcal{D}_{r_n} \}\subseteq  \{z_1^n\mid z_1 \in \mathcal{R}_n\} = \mathcal{B}_{n\mid \text{circ}}.$$  As $r< \cos^{n}\frac{\pi}{n}=r_n^n$, we have 
	$$z\in \mathcal{D}_{r_n^n}\subseteq \mathcal{B}_{n\mid \text{circ}} \subseteq \mathcal{B}_{n,2}\subseteq \mathcal{B}_{n,d} .$$ 
	
	Let $c\in \mathcal{B}_{n,d}$. There exists   $\Psi=(|\psi_1\rangle,|\psi_2\rangle, \cdots, |\psi_n\rangle)\in \mathbf{P}_d^n $ such that 
	$c=\prod_{j=1}^n \langle \psi_j|\psi_{j\oplus 1}\rangle.$ Clearly, the modulus $|c|$  of $c$  satisfies
	$|c|\leq 1$. The equality holds if and only if $|\psi_j\rangle =e^{\mathrm{i} \theta_j} |\psi_1\rangle$ whose corresponding $c$ is exactly $1$. For all other cases, $|c|<1$ hence $c\in \mathbb{S}.$ Therefore, $\mathcal{B}_{n,d}\subseteq \mathbb{S}.$
\end{proof}

\section{Conclusions and Discussion }\label{sect:con} 
    
    We have studied the Bargmann invariants, which are pivotal in detecting the quantum imaginarity of sets of quantum states. We have provided complete characterizations of the Bargmann invariant sets \( \mathcal{B}_3 \) and \( \mathcal{B}_{n \mid \text{circ}} \) for all \( n \geq 3 \). For \( \mathcal{B}_n \) (\( n \geq 4 \)), we have examined several of its properties, including multiplicative closure, star-shapedness, and have derived both upper and lower bounds. Furthermore, we have shown that the Bargmann invariants in \( \mathcal{B}_{n \mid \text{circ}} \) can be realized within a qubit system, and have determined the maximum imaginarity for this set. Additionally, we have established that the real Bargmann invariants in \( \mathcal{B}_{n \mid \text{circ}} \) can be realized by real qubit states. Finally, we have completely characterized the set of all Bargmann invariants, denoted by \( \mathcal{B}_{\infty,d} \).  
    
    Our work advances the understanding of Bargmann invariants, emphasizing their critical role in detecting quantum imaginarity in quantum states. Looking ahead, the full characterization of the Bargmann invariant sets \( \mathcal{B}_n \) for \( n \geq 4 \) remains an open challenge. Furthermore, the question of whether there exist real Bargmann invariants that cannot be realized by real states continues to be an intriguing avenue for future research.

\acknowledgements

This work was supported by National Natural Science Foundation of China under Grants No. 12371458, the Guangdong Basic and Applied Basic Research Foundation under Grants Nos. 2023A1515012074, 2024A1515010380  and the Science and Technology Planning Project of Guangzhou under Grants No. 2023A04J1296.

	\appendix
	\section{Proof of Theorem~~\ref{theorem: polygon}}\label{ap:Proof_Thm1}
Before the proof, we make a remark.  First, we only consider  $	G_{\mathbf{z}}=I+z_1 C_n+z_2 C_n^2+\cdots+z_{n-1} C_n^{n-1}$ as a  Hermitian matrix which is equivalent to $\overline{z}_j=\overline{z}_{n-j}$ for all $j=1,2,\cdots,n-1$. As $1,\xi,\xi^2,\cdots,\xi^{n-1}$ are all the eigenvalues of the matrix $C_n$, so the eigenvalues $\mathbf{\lambda}= (\lambda_0,\lambda_1,\cdots,\lambda_{n-1})$ of the matrix $	G_{\mathbf{z}}$ are exactly given by $\lambda_k=\sum\limits_{j=0}^{n-1}z_j\xi^{kj},~k=0,1,...,n-1.$ To simplify the notation, we will always assume $z_0=1,$ and $\mathbf{z}=(z_0,z_1,\cdots,z_{n-1})^T$ as a column vector.  Let $V_n$ denote the set of all $n$-tuples $\mathbf{z}=(z_0,z_1,\cdots,z_{n-1})^T\in \mathbb{C}^n$ with conditions $z_0=1$ and $\overline{z}_j= z_{n-j}$  for $j=1,2,\cdots,n-1$.    Let $F$ denote the $n$ dimensional Fourier transformation, i.e.,   
\begin{equation*}
	F=\begin{bmatrix}
		1 &1&1&\cdots&1 \\
		1 &\xi&\xi^2&\cdots&\xi^{n-1} \\
		1 &\xi^2&\xi^4&\cdots&\xi^{2(n-1)} \\
		\vdots  &\vdots &\vdots &\ddots &\vdots  \\
		1 &\xi^{n-1}&\xi^{2(n-1)}&\cdots &\xi^{(n-1)^2} \\
	\end{bmatrix}.
\end{equation*}	
If $\mathbf{z}=(z_0,z_1,\cdots,z_{n-1})^T\in V_n$, then $\overline{z_j\xi^{kj}}=z_{n-j}\xi^{k(n-j)}$ which implies all eigenvalues  $\lambda_k$ of $	G_{\mathbf{z}}$ are real. Therefore,  the positive semidefinite of $	G_{\mathbf{z}}$ is  equivalent to $\lambda_k\geq 0$ for every $k=0,1,\cdots,n-1$, i.e., $\lambda=F\mathbf{z}\geq \mathbf{0}.$
Then the set $\mathcal{Z}_n$ can be expressed as 
\begin{equation}\label{eq:Z_n_another}
\mathcal{Z}_n=\{z_1\mid F\mathbf{z} \geq \mathbf{0}, \  \mathbf{z}=(z_0,z_1,\cdots,z_{n-1})^T\in V_n \}.
\end{equation}

Now we start to prove the four steps.

{ \bf Step 1: }	 Show that \(1 \in \mathcal{Z}_n\).

Let $\mathbf{z}=(z_0,z_1,\cdots,z_{n-1})^T$ with $z_j=1$ for all $j$. Clearly,   $\mathbf{z}\in V_n$ and  $F\mathbf{z}=(n,0, \cdots, 0)^T\geq \mathbf{0}.$ So $z_1=1\in \mathcal{Z}_n$. 

{ \bf Step 2: }	If \(z_1 \in \mathcal{Z}_n\), then \(\xi z_1 \in \mathcal{Z}_n\) (where \(\xi = e^{\frac{2\pi \mathrm{i}}{n}}\)). 

As $z_1\in \mathcal{Z}_n$, there exists a $\mathbf{z}=(z_0,z_1,\cdots,z_{n-1})^T\in V_n$ whose second entry is exactly $z_1$ such that $F \mathbf{z}\geq \mathbf{0}$.  That is, 
\begin{equation}\label{eq:Fzgeq0}
	\begin{bmatrix}
		1 &1&1&\cdots&1 \\
		1 &\xi&\xi^2&\cdots&\xi^{n-1} \\
		1 &\xi^2&\xi^4&\cdots&\xi^{2(n-1)} \\
		\vdots  &\vdots &\vdots &\ddots &\vdots  \\
		1 &\xi^{n-1}&\xi^{2(n-1)}&\cdots &\xi^{(n-1)^2} \\
	\end{bmatrix}
	\begin{bmatrix}
		1\\
		z_1\\
		z_2\\
		\vdots\\
		z_{n-1}\\
	\end{bmatrix}
	\geq
	\mathbf{0}.
\end{equation}
Set $\mathbf{z}_\xi=(z_0 \xi^0,z_1 \xi^1,\cdots, z_{n-1}\xi^{n-1})$. Clearly, $\overline{z_j\xi^{j}}=z_{n-j}\xi^{n-j}$, for $j=1,2,\cdots, n-1$ and $z_0 \xi^0=1$. Hence $\mathbf{z}_\xi\in V_n$. Moreover, we can check that $F\mathbf{z}_\xi=(C F)\mathbf{z} =C(F\mathbf{z})\geq \mathbf{0}.$  In fact, 
\begin{equation*}
F\mathbf{z}_\xi=	\begin{bmatrix}
		1 &\xi&\xi^2&\cdots&\xi^{n-1} \\
		1 &\xi^2&\xi^4&\cdots&\xi^{2(n-1)} \\
		\vdots  &\vdots &\vdots &\ddots &\vdots  \\
		1 &\xi^{n-1}&\xi^{2(n-1)}&\cdots &\xi^{(n-1)^2} \\
		1 &1&1&\cdots&1 \\
	\end{bmatrix}
	\begin{bmatrix}
		1\\
		z_1\\
		\vdots\\
		z_{n-2}\\
		z_{n-1}\\
	\end{bmatrix}
	\geq
	\mathbf{0}.
\end{equation*}
Therefore, the second entry of $\mathbf{z}_\xi$, i.e., $\xi z_1$, belongs to $\mathcal{Z}_n$.

{ \bf Step 3: }	The set  $\mathcal{Z}_n$ is a convex set.

Given $z_1, z_1'\in \mathcal{Z}_{n}$  and $0\leq p\leq 1$. 
There exist  $\mathbf{z}=(z_0,z_1,\cdots,z_{n-1})^T,\mathbf{z}'=(z_0',z_1',\cdots,z_{n-1}')^T \in V_n$ such that $F\mathbf{z}\geq \mathbf{0}$ and $F\mathbf{z}'\geq \mathbf{0}.$ Set $\mathbf{z}_p:=p\mathbf{z}+(1-p) \mathbf{z}'=(pz_0+(1-p)z_0',pz_1+(1-p)z_1',\cdots,pz_{n-1}+(1-p)z_{n-1}' )$. It is easy to verify that $\mathbf{z}_p \in V_n$  and $$F\mathbf{z}_p=pF\mathbf{z}+(1-p)F\mathbf{z}'\geq \mathbf{0}.$$
Hence, $p z_1+(1-p)z_1'\in \mathcal{Z}_{n} .$

{ \bf Step 4: }  For each $z_1=x+\mathrm{i} y\in \mathcal{Z}_n$,  the point $(x,y)$ must below  the line $\ell_n: x{\cos}\dfrac{\pi}{n}+y{\sin}\dfrac{\pi}{n}-{\cos}\dfrac{\pi}{n}= 0 $ which passing through points $1$ and $\xi$. 

Given   $z_1=x+\mathrm{i} y\in \mathcal{Z}_n,$ $\exists \mathbf{z}=(z_0,z_1,\cdots,z_{n-1})^T\in V_n$   such that $F \mathbf{z}\geq \mathbf{0}$, i.e., Eq.\eqref{eq:Fzgeq0} holds.   In the following, we need to show 
 $$x{\cos}\dfrac{\pi}{n}+y{\sin}\dfrac{\pi}{n}-{\cos}\dfrac{\pi}{n}\leq 0 .$$
 Equivalentlly, 
\begin{equation}\label{eq:ell}
	2{\cos}\dfrac{\pi}{n}-2\left (x{\cos}\dfrac{\pi}{n}+y{\sin}\dfrac{\pi}{n}\right)\geq  0 .
	\end{equation}
Note that $-\left (x{\cos}\dfrac{\pi}{n}+y{\sin}\dfrac{\pi}{n}\right)$ equals to the real part of $z_1(-\cos \dfrac{\pi}{n} + \mathrm{i} \sin \dfrac{\pi}{n} ).$ Set $\mathbf{b}=(b_0, b_1,\cdots,b_{n-1})^T$ where $b_0=2 \cos\dfrac{\pi}{n},$ $b_1=\overline{b}_{n-1} =-\cos \dfrac{\pi}{n} + \mathrm{i} \sin \dfrac{\pi}{n},$ and $b_k=0,  \forall 2\leq k\leq n-2.$ As $\overline{z}_1=z_{n-1}$, Eq. \eqref{eq:ell} can be rewritten as 
\begin{equation}\label{eq:ellz}
	\mathbf{b}^T \mathbf{z} \geq  0 .
\end{equation}
As $F \mathbf{z}\geq \mathbf{0}$, for each $\mathbf{a}=(a_0, a_1,\cdots,a_{n-1})^T\geq \mathbf{0}$, i.e., every entry is nonnegative,  we have 
\begin{equation}\label{eq:ell_aFz}
	\mathbf{a} ^TF \mathbf{z}\geq 0.
\end{equation}
Theofore, to prove Eq. \eqref{eq:ellz},  it is sufficient to find some $\mathbf{a}=(a_0, a_1,\cdots,a_{n-1})^T\geq \mathbf{0}$ such that 
$$\mathbf{a} ^TF=\mathbf{b}^T.$$

In fact,  to ensure the latter condition, $\mathbf{a} ^T$ must be $\mathbf{b}^TF^{-1}$. So we only need to show $\mathbf{a} ^T:=\mathbf{b}^TF^{-1}\geq \mathbf{0}.$ As
\begin{equation*}
	F^{-1}=\dfrac{1}{n}\begin{bmatrix}
		1 &1&1&\cdots&1 \\
		1 &\overline{\xi}&\overline{\xi}^2&\cdots&\overline{\xi}^{n-1} \\
		1 &\overline{\xi}^2&\overline{\xi}^4&\cdots&\overline{\xi}^{2(n-1)} \\
		\vdots  &\vdots &\vdots &\ddots &\vdots  \\
		1 &\overline{\xi}^{n-1}&\overline{\xi}^{2(n-1)}&\cdots &\overline{\xi}^{(n-1)^2} \\
	\end{bmatrix},
\end{equation*}	
 we have 
$$a_k= \dfrac{1}{n}(2\cos \dfrac{\pi}{n}+2 \mathrm{Re}[b_1 \overline{\xi}^k])= \dfrac{2}{n}( \cos \dfrac{\pi}{n} -\cos \dfrac{(2k+1)\pi}{n}).$$
Hence,  $k=0,$ or $n-1$, $a_k=0$ and for   $1\leq k\leq n-2$, $a_k>0.$ This proves   $\mathbf{a} ^T\geq\mathbf{0}.$ 

\qed

  \vskip 5pt
  
  \noindent \emph{Remark:} If $z_1=x+\mathrm{i} y$ lies on the boundary of $\mathcal{Z}_n$ and $\mathbf{z}=(z_0,z_1,\cdots,z_{n-1})^T \in V_n$ such that $F\mathbf{z}\geq \mathbf{0}$, we claim that the rank of corresponding Gram matrix $G_{\mathbf{z}}$ does not greater than 2.   Without loss of generality, we assume $(x,y)$ is on the line $\ell_n:  2{\cos}\dfrac{\pi}{n} -2(x{\cos}\dfrac{\pi}{n}+y{\sin}\dfrac{\pi}{n})= 0 $. Let $\mathbf{a},\mathbf{b}$ be the vectors as in the proof of Theorem \ref{theorem: polygon} and $\mathbf{\lambda}= (\lambda_0,\lambda_1,\cdots,\lambda_{n-1})^T$  be eigenvalues of $G_\mathbf{z}$.
  	Then we have 
  	$$0=2{\cos}\dfrac{\pi}{n} -2(x{\cos}\dfrac{\pi}{n}+y{\sin}\dfrac{\pi}{n})=	\mathbf{a} ^TF \mathbf{z}=\mathbf{a} ^T \lambda.
	$$ 
	At the end of the above proof, we point out that for $k=0,$ or $n-1$, $a_k=0$ and for   $1\leq k\leq n-2$, $a_k>0.$ So we have $a_1\lambda_1+\cdots+a_{n-2}\lambda_{n-2}=0$. From these conditions and $\lambda_k\geq 0$, we must have $\lambda_1=\cdots=\lambda_{n-2}=0.$ Therefore,  $\mathrm{rank}(G_{\mathbf{z}})\leq 2.$   So such Gram matrix $G_{\mathbf{z}}$ can be realized in two dimensional system.

    \bibliography{bibliography}
 
\end{document}